\newtheorem{thm}{Theorem}[section]
\newtheorem{cor}[thm]{Corollary}
\newtheorem{lem}[thm]{Lemma}
\newtheorem{propy}[thm]{Property}
\newtheorem{con}[thm]{Conjecture}
\newtheorem{defn}[thm]{Definition}
\newcommand{\RR}{\mathbb{R}}                
\newcommand{\HH}{\mathbb{H}}                
\newcommand{\SO}{\mathrm{SO}}               
\newcommand{\V}{\underline}
\newcommand{\bee}{\begin{equation}}
\newcommand{\eee}{\end{equation}}
\newcommand{\ben}{\begin{equation*}}
\newcommand{\een}{\end{equation*}}
\newcommand{\ba}{\begin{eqnarray}}
\newcommand{\ea}{\end{eqnarray}}
\newcommand{\ban}{\begin{eqnarray*}}
\newcommand{\ean}{\end{eqnarray*}}
\begin{document}

\title{On the algebraic  structure of isotropic generalized elasticity theories}

\author{N. Auffray%
\thanks{Electronic address: \texttt{Nicolas.auffray@univ-mlv.fr}; Corresponding author}}
\affil{Université Paris-Est, Laboratoire Modélisation et Simulation Multi Echelle, MSME UMR 8208 CNRS, 5 bd Descartes, 77454 Marne-la-Vallée, France}

\date{\today}%

\maketitle
\begin{abstract}
In this paper the algebraic  structure of the isotropic $n$th-order gradient elasticity is investigated.  In the classical isotropic elasticity it is well-known that the constitutive relation can be broken down into two uncoupled relations between elementary part of the strain and the stress tensors (deviatoric and spherical). In this paper we demonstrate that this result can not be generalized and since $2$nd-order isotropic elasticity there exist couplings between elementary parts of higher-order strain and stress tensors. Therefore, and in certain way, $n$th-order isotropic elasticity have the same kind of algebraic  structure as anisotropic classical elasticity. This structure is investigated in the case of $2$nd-order isotropic elasticity, and moduli characterizing the behavior are provided.\end{abstract}







\begin{center}
\emph{Dedicated to Prof. Antonio Di Carlo in recognition of his academic activity}
\end{center}

\section{Introduction}

In the last years it has been widely recognized that classical continuum mechanics was unable to describe a variety of important mechanical and physical phenomena. In particular,
the size effects and non-local behaviors due to the discrete nature of matter at a sufficiently small scale, the presence of microstructural defects or the existence of internal constraints cannot be captured by classical continuum mechanics.
Higher gradient models are also needed when continuum models are introduced for describing systems in which strong inhomogeneities of physical properties are present at eventually different length scales (see e.g. \cite{Abu08,Pol07,PS97,TB96,YM10,YCM11,YM12}), in continuum systems in which some \emph{microscopical} degrees of freedom can \emph{capture} a relevant amount of deformation energy (see e.g. \cite{Car05,CAK06}), in fracture (see e.g. \cite{CHO98,Li11}), and so on. In \cite{Min65,FCB11} second strain-gradient elasticity,  was investigated by Mindlin  who shows that a second gradient of strain is needed to describe, in a continuous manner, capillarity and cohesion effects in elastic continuum. Furthermore, it has recently been proved, that some microstructure can be specifically designed to render any higher-grade effects predominant \cite{Asi03}. Aside from strain-gradient theories, gradient effects have recently been considered both in the context of plate theory \cite{LS11a,LS11b}, and in the context of stress-gradient \cite{FK12}. 

In the present contribution $n$th-order strain-gradient isotropic ($\SO(3)$-invariance) elasticity is considered. By $n$th-grade elastic materials we mean those materials whose mechanical response depends on the present value of the first $n$ deformation-gradients \cite{ISM12,PGV13}: for $n = 1$ classical elasticity tensor is retrieved, while for case $n = 2$ Mindlin's strain-gradient elasticity \cite{Min64,ME68,ISV09,MB11a,ALH13} is obtained. At the present time, few results (see e.g.\cite{ISM12,OA13a,OA13b}) concerning a general theory of $n$th-gradient elastic material are known. Therefore a general picture of this extended theory has to be drawn. The aim of the present paper is to provide a general result concerning the algebraic  structure of isotropic $n$th-order elasticity tensors. More specifically we will show that, contrary to classical elasticity, for $n$th-order isotropic elasticity the harmonic parts\footnote{As detailed latter on, the harmonic decomposition is the correct generalization of the deviatoric and spheric decomposition of a second-order symmetric tensor.} of higher-order strain and stress tensors are always coupled. In classical elasticity these kinds of coupling  appear for anisotropic classes greater or equal to transverse isotropy \cite{Wal84}. Therefore, as already observed in the case of strain-gradient elasticity \cite{MB11a}, the algebraic  structure of $n$th-order elasticity tensors is very similar to  anisotropic classical elasticity tensors.\\

\noindent\textbf{The paper is organized as follows} In a first section the general context of our study will be presented. After introducing the framework of $n$th-gradient elastic materials, the harmonic decomposition, which is a central tool in our study, is presented together with a method to easily compute its structure. In a second time, the Walpole and Kelvin tensor decompositions are defined and their links investigated. \textbf{This point is important since these two decompositions seem to be often merged}. The section is concluded by stating (without proofs at this stage) our main results. In the next section (\S.\ref{sec:Ex}) two specific situations are studied: the classical and the second-order elasticity. For the second-order elasticity, and following \cite{Wal84,MB11a}, an intrinsic representation, based on a kinematic interpretation of strain-gradient tensor is provided. This interpretation give rises to $5$ moduli associated with \emph{elementary} mechanisms. The last section  is   devoted to the proofs of our results. In appendices some details and explicit constructions concerning  second-order elasticity tensors are provided.\\

\noindent\textbf{Notations} Let us  define some notations that will be used throughout the paper, other less important one will be introduced in the core of the text. 
The following matrix groups will be considered in the present paper :
\begin{itemize}
\item $\mathrm{GL}(3)$: the group of all the invertible transformations of $\mathbb{R}^{3}$, i.e. if $\mathrm{F}\in\mathrm{GL}(3)$ then $\mathrm{det}(\mathrm{F})\neq0$;
\item $\mathrm{O}(3)$ : the orthogonal group, i.e. the group of all isometries of $\mathbb{R}^{3}$ i.e. if $\mathrm{Q}\in\mathrm{O}(3)$ $\mathrm{det}(\mathrm{Q})\pm 1$ and $\mathrm{Q}^{-1}=\mathrm{Q}^{T}$, where the superscript $^{T}$ denotes the transposition.
\item $\mathrm{SO}(3)$ : the special orthogonal group, i.e. the subgroup of $\mathrm{O}(3)$ of elements satisfying $\mathrm{det}(\mathrm{Q})= 1$. This is the group of 3D rotations;
\item $\mathcal{O}$  : the octahedral group, i.e. the group of rotations that let a cube invariant.
\end{itemize} 
Vector spaces will be denoted using blackboard fonts, and their tensorial order indicated by using formal indices. Generic tensor spaces will be denoted $\mathbb{T}$, and space of harmonic tensors $k$-th order harmonic tensor $\mathbb{H}^{k}$. Harmonic tensor spaces are $\mathrm{SO}(3)$-invariant, their elements are \emph{completely symmetric} and \emph{traceless} $k$-th order tensors \cite{Bac70,JCB78,FV96}. In 3D we have $\mathrm{dim}(\mathbb{H}^{k})=2k+1$,\textbf{ algorithms to compute the harmonic decomposition can be found in the following references}\cite{Spe70,Jar03,Auf08b,Auf13,Ste94}. The precise meaning of these spaces will be detailed in the text.
When needed, index symmetries of both space and their element are expressed as follows:  $(..)$  indicates invariance under permutation of the indices in parentheses, and $\underline{..}$ indicates invariance with respect to permutations of the underlined blocks.\\

In the following, the  tensorial product of $\RR^{3}$ is denoted $\otimes$, while $\otimes^{k}$ indicates the $k$-order tensor product and  $S^k$ its completely symmetrized version. Classically the spatial derivation will be denoted by $\nabla$.

\section{Setting of the problem and main results}\label{sec:set}

\subsection{$n$th-gradient elastic materials}

\subsubsection*{Primary and dual quantities}
In $n$th grade elasticity, the mechanical response is supposed to depend on the first $n$ deformation-gradients. Those quantities are defined recursively 
\ben
\begin{cases}
\varepsilon^{(1)}=\varepsilon\\
\varepsilon^{(k+1)}=\varepsilon^{(k)}\otimes\nabla\\
\end{cases}
\een
where $\varepsilon$ and $\varepsilon^{(k)}$ denote, respectively, the classical symmetric strain tensor and the $k$th-order strain tensor.  Due to the fact that $\varepsilon$ is a second-order symmetric tensor, $\varepsilon^{(k)}$ is a $(k+1)$th-order tensor that belongs to the space $\mathbb{T}^{(k)}$ defined as follows:
\bee\label{eq:def}
\mathbb{T}^{(k)}=S^2(\RR^{3})\times S^{k-1}(\RR^{3})\subseteq\otimes^{k+1}\RR^{3}
\eee
Elements of $\mathbb{T}^{(k)}$ are $(k+1)$th-order tensors symmetric with respect to their 2 first indices and their $k-1$ last. 
Using the convention for index symmetries defined in the introduction, elements of $\mathbb{T}^{(k)}$ have the following shape:
\ben
\mathrm{T}_{(i_{1}i_{2})(i_{3}\ldots i_{k-1})}
\een
By duality  $\sigma^{(k)}$ is defined as the energetic conjugate of $\varepsilon^{(k)}$, and for $k=1$ the symmetric Cauchy stress tensor is retrieved. It is worth noting that the $\sigma^{(k)}$ is not the gradient of $\sigma^{(k-1)}$.\\ 

\subsubsection*{Constitutive law}
The object of  our study concerns the following $n$th-order elasticity relation:
\ben
\sigma^{(n)}=\mathrm{C}^{(n)}\varepsilon^{(n)}
\een
where  $\mathrm{C}^{(n)}$ is a $2(n+1)$th-order tensor. It is worth noting that this tensorial law is just a part of the complete $n$th-order gradient elasticity law which also contains coupling terms with the other deformation gradients \cite{Min64,Min65}. Therefore, in this paper, \emph{$n$th-order elasticity relation}  only refers to this part of the complete $n$th-order gradient elasticity law. For $n=1$ the classical Hooke's law is obtained, and for $n=2$ we obtained the second-order elasticity studied, for example, in \cite{ISV09,ALH13}. The vector space of $\mathrm{C}^{(n)}$ is defined as the space of symmetric endomorphism of $\mathbb{T}^{(n)}$. In terms of tensorial products, this space is constructed as
\ben
\mathbb{C}^{(n)}=S^2(\mathbb{T}^{(n)})\subseteq\otimes^{2(n+1)}\RR^{3}
\een
Hence elements of $\mathbb{C}^{(n)}$ inherit the minor symmetries of $\sigma^{(n)}$ and $\varepsilon^{(n)}$, completed by the major one. In other terms, elements of $\mathbb{C}^{(n)}$ have  the following shape:
\ben
\mathrm{C}_{\underline{(i_{1}i_{2})(j_{1}\ldots j_{n-1})}\ \underline{(k_{1}k_{2})(l_{1}\ldots l_{n-1})}}
\een
For our study, we need to break tensor spaces into elementary building blocks. The definition of these elementary spaces depends on the considered group action.
In the present situation, tensor spaces will be decomposed into $\SO(3)$-invariant spaces. This decomposition is the (correct) generalization of the decomposition of second-order symmetric tensors into deviatoric and spheric parts.

\subsection{The harmonic decomposition}
The harmonic decomposition has been widely used in the mechanical community for studying anisotropic elasticity \cite{Bac70,BKO94,FV96,FV97}. In order to present our results in self-contained way, basic definitions concerning this decomposition are summed up here. A more general and rigorous presentation can be found in \cite{JCB78,Ste94}\textbf{, and some historical considerations concerning its uses in mechanics in} \cite{FV96}.
In $\RR^3$, under $\SO(3)$-action any tensor space $\mathbb{V}$ can be decomposed orthogonally into a sum of harmonic tensor spaces of different orders\footnote{In the harmonic decomposition of a tensor space, the equality sign means that there exists an isomorphism $h$ between the right- and the left-hand side of the decomposition. In order to avoid the use of too many notations we do not use a specific sign to indicate this isomorphism.}:
\[
\mathbb{V}=\bigoplus_{i=0}^{p}\alpha_{i}\mathbb{H}^{i} 
\]
where $p$ indicates the tensorial order of $\mathbb{V}$, $\mathbb{H}^{i}$ is the vector space of $i$th-order harmonic tensors and $\alpha_{i}$ indicates the number of copies of  $\mathbb{H}^{i}$ in the decomposition.
Elements of $\mathbb{H}^{i}$ are $i$th-order \emph{completely symmetric} \emph{traceless} tensors, the dimension of their vector space is $2i+1$.
The denomination harmonic is related to a classical isomorphism in $\RR^3$ between the space of harmonic polynomials (i.e. polynomials with null Laplacian) of degree $i$ and the space of $i$th-order completely symmetric traceless tensors. If needed, an explicit isomorphism, denoted $h$ in the following, can be computed using, for example, an algorithm proposed by Spencer \cite{Spe70,Jar03}. The following  property is important \cite{GSS88}
\begin{propy}\label{prop_unic}
The isomorphism $h$ that realizes the harmonic decomposition is uniquely defined iff $\alpha_{i}\leq1$ for $i\in[0,p]$.
\end{propy}
According to the problem under investigation, the explicit knowledge of $h$ may be required or not. This point is important because the explicit computation of an isomorphism is a task which complexity increases very quickly with the tensorial order. At the opposite, the determination of the structure of the decomposition, i.e. the number of spaces together with their multiplicities is almost straightforward using the Clebsch-Gordan product.\\

\subsection{Clebsch-Gordan product}

For our need, in order to determine the algebraic  structure of $n$th-order elasticity tensors, we only need to know the structure of the harmonic decomposition.
To compute it we use the tensorial product of group representations. More details can be found in \cite{JCB78,Auf08b}. The computation rule is simple. Consider two harmonic tensor spaces $\mathbb{H}^{i}$ and $\mathbb{H}^{j}$, whose product space is noted $\mathbb{T}^{i+k}:=\mathbb{H}^{i}\otimes\mathbb{H}^{j}$. This space admits the following $\mathrm{SO}(3)$-invariant decomposition:
\bee\label{eq:CG_dec}
\mathbb{G}^{i+j}=\mathbb{H}^{i}\otimes\mathbb{H}^{j}=\bigoplus_{k=|i-j|}^{i+j}\mathbb{H}^{k}
\eee
In the following this rule will be referred to as the Clebsh-Gordan product.
For example, consider $\mathbb{H}_{a}^{1}$ and $\mathbb{H}_{b}^{1}$ two different first-order harmonic spaces. Elements of such spaces are vectors. According to formula \eqref{eq:CG_dec}, the $\mathrm{SO}(3)$-invariant decomposition of $\mathbb{T}^{2}$ is:
\ben
\mathbb{G}^{2}=\mathbb{H}_{a}^{1}\otimes\mathbb{H}_{b}^{1}=\mathbb{H}^{2}\oplus\mathbb{H}^{1}\oplus\mathbb{H}^{0}
\een
As an example, the tensorial product of two vector spaces generates a second-order tensor space. The resulting structure is composed of a scalar ($\mathbb{H}^{0}$), a vector ($\mathbb{H}^{1}$) and a deviator ($\mathbb{H}^{2}$).
This computation rule has to be completed by the following property \cite{JCB78}
\begin{propy}\label{prop1}
The decomposition of an even-order (resp. odd-order) completely symmetric tensor, i.e. invariant under any index permutation, only contains even-order (resp. odd-order) harmonic spaces.
\end{propy}
Applying these rules, the harmonic structure of $\mathbb{T}^{(n)}$ (the space of $n$th-order stress and strain tensors), is easily determined. Results for low order spaces are provided in the following table:
\ben\label{eq:Tab_Dec}
   \begin{array}{|c||c|c|c|c|}
       \hline
       \text{Vector space}&\text{Elements}&\text{Harmonic decomposition}&\text{Dimension}&\text{Unicity}\\\hline
       \mathbb{T}^{1}&\varepsilon_{(ij)}&\HH^2\oplus\HH^0&6&\text{Y}\\ \hline
       \mathbb{T}^{2}&\varepsilon_{(ij),k}&\HH^3\oplus\HH^2\oplus2\HH^1&18&\text{N}\\ \hline
       \mathbb{T}^{3}&\varepsilon_{(ij),(kl)}&\HH^4\oplus\HH^3\oplus3\HH^2\oplus\HH^1\oplus2\HH^0&36&\text{N}\\ \hline
   \end{array}
\een
It can be observed, on these examples, that, according to the property \ref{prop_unic}, $\mathbb{T}^{1}$ is the only space having an uniquely defined harmonic decomposition. This fact will be important for the forthcoming discussion.

\subsection{Tensor decompositions}
In the following\footnote{In this subsection all $(n)$ superscripts will be dropped down to enlighten notations.} two different tensor decompositions will be discussed, the \emph{Spectral or Kelvin's decomposition} and the \emph{Harmonic-induced or Walpole's decomposition}. To that aim, some orthonormal bases have to be introduced. On $\mathbb{T}$ three different bases will be considered:
\begin{itemize}
\item $\mathcal{S}$: the spatial-basis, constructed by tensor products of the canonical one (see Appendix \ref{Sec:MatRep} for an explicit example);
\item $\mathcal{E}(\mathrm{C})$: the eigen-basis, constructed from the eigenvectors of an element $\mathrm{C}\in\mathbb{T}\otimes^S\mathbb{T}$, the notation keeps track of that dependence; 
\item $\mathcal{H}(\mathbb{T})$: the harmonic-basis, constructed from an $\SO(3)$-irreducible spaces decomposition of $\mathbb{T}$, the notation keeps track of that dependence. 
\end{itemize}
Similarly on $\mathbb{T}\otimes^S\mathbb{T}$ the following bases will be considered:
\begin{itemize}
\item $\mathcal{S}\otimes^S\mathcal{S}$: in this basis the components of $\mathrm{C}$ are denoted $C_{\mathcal{S}}$. This corresponds to the classical matrix representation of a constitutive tensor \cite{MC90,ALH13};
\item $\mathcal{E}(\mathrm{C})\otimes^S\mathcal{E}(\mathrm{C})$: in this basis the components of $\mathrm{C}$ are denoted $C_{\mathcal{E}}$. In this basis the matrix of $\mathrm{C}$ is diagonal. This representation correspond to the \emph{Spectral or Kelvin's decomposition}, which has been widely used in the context of classical elasticity (see e.g. \cite{Ryc84,MC90,BBS07});
\item $\mathcal{H}(\mathbb{T})\otimes^S\mathcal{H}(\mathbb{T})$: in this basis the components of $\mathrm{C}$ are denoted $C_{\mathcal{H}}$. The associated matrix representation has been less investigated in the literature, but it can be associated with some constructions of Walpole \cite{Wal84,MB11a,MB13}. Hence this decomposition will be referred to as the \emph{Harmonic-induced or Walpole's decomposition}\footnote{To be more exact, this decomposition corresponds to the isotropic Walpole decomposition, since Walpole decomposition can be associated with any $G$-invariant decomposition of $\mathbb{T}$, in which $G$ is a subgroup of $\mathrm{SO}(3)$.}.
\end{itemize}

The links that relate the different bases are indicated in the following diagram:
\ben
\forall \mathrm{C}\in\mathbb{T}\otimes^S\mathbb{T},\quad
  \xymatrix{
    \mathcal{S}\otimes^S\mathcal{S} \ar[r]^{d}\ar[rd]^{\phi(h)}& \mathcal{E}(\mathrm{C})\otimes^S\mathcal{E}(\mathrm{C})\ar@{~>}[r]&\mathcal{E}(\mathrm{C})&\\
                          & \mathcal{H}(\mathbb{T})\otimes^S\mathcal{H}(\mathbb{T})\ar[u]_{d^{\star}}&\mathcal{H}(\mathbb{T})\ar[u]^{\psi(d^{\star})}\ar@{~>}[l]&\ar[l]^{h}\ar[lu]_{\psi(d)} \mathcal{S}
    }
\een
where $d$ and $d^{\star}$ indicate diagonalization processes, $\psi(d)$ and $\psi(d^{\star})$ the related changes of basis, and $\phi(h)$ the change of basis induced by the harmonic decomposition $h$ of $\mathbb{T}$. In this diagram the wavy arrows indicate that the change of basis in the destination space is induced by an operation in the original space. 
As a consequence the different components of $\mathrm{C}$ are related in the following way:
\ben
  \xymatrix{
    &C_{\mathcal{E}}\\
    C_{\mathcal{S}} \ar[r]^{\phi(h)}\ar[ru]^{d}   &C_{\mathcal{H}}\ar[u]_{d^{\star}}\\
    }
\een
It has to be observed that the former diagram commutes, that is:
\ben
d=d^{\star}\circ\phi(h)
\een
Therefore $C_{\mathcal{H}}$ maybe considered as an intermediary step in the transformation from $C_{\mathcal{S}}$ to $C_{\mathcal{E}}$.\\

\emph{The basis $\mathcal{H}(\mathbb{T})$ is determined by the $\SO(3)$-structure of $\mathbb{T}$, while $\mathcal{E}(\mathrm{C})$ is specific to each element of $\mathbb{T}\otimes^S\mathbb{T}$. As a consequence the isotropic Walpole's decomposition is determined by the $\SO(3)$-structure\footnote{Or more generally a $G$-invariant decomposition for anisotropic tensors, in which $G$ is a subgroup of $\mathrm{SO(3)}$.} of $\mathbb{T}$ meanwhile the Kelvin's decomposition is specific to each element of $\mathbb{T}\otimes^S\mathbb{T}$.}\\ 

Therefore there is an intrinsic dissymmetry between these two decompositions. But as in some specific situations they coincide, these decompositions are often merged and the Walpole's one is sometimes referred to as spectral, which is false. For classical elasticity, these decompositions are identical only for both the isotropic and the cubic systems. Furthermore, and as it will be demonstrated, the Walpole representation for higher-order isotropic elasticity is no more diagonal.\\

\subsection{Main results}

In \S.\ref{sec:proof} the following theorem is proved:
\begin{thm}[Theorem 1]\label{th1}
If the harmonic decomposition of $\mathbb{T}$ is unique, and if $\mathrm{C}$ is isotropic then its matrix representation $C_{\mathcal{H}}$ is diagonal. In such a case $d^{\star}$ reduces to a permutation and the Walpole and the Kelvin decompositions are equivalent.
\end{thm}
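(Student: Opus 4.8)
The plan is to recognize that the isotropy of $\mathrm{C}$ is precisely the statement that $\mathrm{C}$, viewed as an endomorphism of $\mathbb{T}$, commutes with the $\SO(3)$-action, so that the theorem becomes a direct application of Schur's lemma. First I would make this identification explicit: writing $\rho$ for the representation of $\SO(3)$ on $\mathbb{T}$, the isotropy of $\mathrm{C}\in S^2(\mathbb{T})$ means $\rho(\mathrm{Q})\,\mathrm{C}\,\rho(\mathrm{Q})^{-1}=\mathrm{C}$ for every $\mathrm{Q}\in\SO(3)$, i.e. $\rho(\mathrm{Q})\,\mathrm{C}=\mathrm{C}\,\rho(\mathrm{Q})$. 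Thus $\mathrm{C}$ is an $\SO(3)$-equivariant self-adjoint endomorphism of $\mathbb{T}$. I would then fix a harmonic basis $\mathcal{H}(\mathbb{T})$ adapted to the decomposition $\mathbb{T}=\bigoplus_{i=0}^{p}\mathbb{H}^{i}$, where the uniqueness hypothesis guarantees each multiplicity satisfies $\alpha_{i}\le 1$ by Property \ref{prop_unic}.

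In this basis $C_{\mathcal{H}}$ splits into blocks $\mathrm{C}_{ij}\colon\mathbb{H}^{j}\to\mathbb{H}^{i}$, each intertwining two irreducible $\SO(3)$-representations, and Schur's lemma applies block by block. Since harmonic spaces of distinct orders are pairwise non-isomorphic irreducibles, every off-diagonal block with $i\ne j$ must vanish; and since the $(2i+1)$-dimensional real representations $\mathbb{H}^{i}$ are of real type, i.e. $\operatorname{End}_{\SO(3)}(\mathbb{H}^{i})\cong\RR$, each diagonal block $\mathrm{C}_{ii}$ is a scalar multiple of $\Id_{\mathbb{H}^{i}}$. Hence $C_{\mathcal{H}}$ is block-diagonal with scalar blocks, i.e. diagonal, which is the first assertion. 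I would stress that the uniqueness hypothesis is exactly what is needed: if some $\alpha_{i}\ge 2$, the isotypic component $\alpha_{i}\mathbb{H}^{i}$ would carry the endomorphism algebra $M_{\alpha_{i}}(\RR)$, and the corresponding block of $\mathrm{C}$ could be an arbitrary symmetric $\alpha_{i}\times\alpha_{i}$ matrix acting through $\Id_{\mathbb{H}^{i}}$, generally not diagonal in $\mathcal{H}(\mathbb{T})$. The only genuinely non-formal point is the real-type statement $\operatorname{End}_{\SO(3)}(\mathbb{H}^{i})\cong\RR$, which rests on the absolute irreducibility of the odd-dimensional real $\SO(3)$-representations (their complexifications remain irreducible); this is the step I expect to require care, while the rest is bookkeeping of blocks.

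For the second assertion, since $C_{\mathcal{H}}$ is already diagonal, the harmonic basis vectors are themselves eigenvectors of $\mathrm{C}$, so $\mathcal{H}(\mathbb{T})$ is an admissible choice of eigenbasis $\mathcal{E}(\mathrm{C})$. Consequently the diagonalization $d^{\star}$ sending $C_{\mathcal{H}}$ to $C_{\mathcal{E}}$ has nothing left to diagonalize and can at most reorder the diagonal entries into the conventional eigenvalue ordering; hence $d^{\star}$ reduces to a permutation. It follows that $\mathcal{E}(\mathrm{C})$ and $\mathcal{H}(\mathbb{T})$ coincide up to this permutation, so the Kelvin (spectral) and the isotropic Walpole (harmonic-induced) decompositions are equivalent.
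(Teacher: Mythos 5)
Your proof is correct, and it rests on the same key tool as the paper's --- Schur's lemma applied to the $\SO(3)$-equivariant self-adjoint endomorphism $\mathrm{C}$ of $\mathbb{T}=\bigoplus\mathbb{H}^{\alpha_k}$ --- but the two arguments dispose of the off-diagonal couplings differently. You kill the blocks $\mathrm{C}_{ij}$ ($i\neq j$) directly with the first clause of Schur's lemma (no nonzero intertwiner between inequivalent irreducibles), whereas the paper never invokes that clause: it instead expands $\mathbb{T}\otimes^S\mathbb{T}$ via the Clebsch--Gordan product, observes that an $\mathbb{H}^{0}$ summand can only arise from $\mathbb{H}^{\alpha_i}\otimes\mathbb{H}^{\alpha_j}$ when $\alpha_i=\alpha_j$, and concludes that under the uniqueness hypothesis the isotropic part of $\mathbb{T}\otimes^S\mathbb{T}$ lives entirely in the ``self-product'' terms, reserving Schur's lemma (its Lemma on isotropic self-adjoint endomorphisms of $\mathbb{H}^k$) for the diagonal blocks only. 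The two mechanisms are equivalent --- $\operatorname{Hom}_{\SO(3)}(\mathbb{H}^{\alpha_j},\mathbb{H}^{\alpha_i})=0$ for $\alpha_i\neq\alpha_j$ is exactly the statement that $\mathbb{H}^{\alpha_i}\otimes\mathbb{H}^{\alpha_j}$ contains no $\mathbb{H}^0$ --- but each buys something: your route is shorter and, by flagging that $\operatorname{End}_{\SO(3)}(\mathbb{H}^{i})\cong\RR$ requires the absolute irreducibility of the odd-dimensional real representations, it is more careful than the paper on the one genuinely non-formal point (which the paper's statement of Schur's lemma silently assumes; alternatively, self-adjointness of $\mathrm{C}_{ii}$ gives a real eigenvalue whose eigenspace is invariant, yielding scalarity without the real-type discussion); the paper's Clebsch--Gordan route, on the other hand, produces as a by-product the dimension count $\dim([\mathbb{T}\otimes^S\mathbb{T}]^{\SO(3)})=\sharp(\mathbb{T})$, which is what feeds the subsequent Corollary and the reformulation as Theorem 1'. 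Your closing treatment of $d^{\star}$ as a mere permutation, and your remark on what fails when some multiplicity exceeds one, match the paper and its $n=2$ example.
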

This theorem can be reformulated in another way. To that aim let $\left[\mathbb{T}\otimes^S\mathbb{T}\right]^{\SO(3)}$ denotes the subspace of isotropic elements of $\mathbb{T}\otimes^S\mathbb{T}$, and $\sharp^{\SO(3)}(\mathbb{T})$ the number of harmonic components in the decomposition of $\mathbb{T}$. We have the alternative formulation to Theorem \ref{th1}:
\begin{thm}[Theorem 1']\label{th2}
Let consider $\mathrm{C}\in\left[\mathbb{T}\otimes^S\mathbb{T}\right]^{\SO(3)}$, the Walpole and the Kelvin representation of $\mathrm{C}$ are equivalent iff
\ben
\dim(\left[\mathbb{T}\otimes^S\mathbb{T}\right]^{\SO(3)})=\sharp^{\SO(3)}(\mathbb{T})
\een
\end{thm}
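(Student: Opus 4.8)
The plan is to reduce the stated equivalence to a purely dimensional identity governed by the multiplicities $\alpha_{i}$ appearing in the harmonic decomposition $\mathbb{T}=\bigoplus_{i=0}^{p}\alpha_{i}\HH^{i}$, and then to recognize that this identity is precisely the uniqueness condition of Property \ref{prop_unic}, so that Theorem \ref{th1} applies. First I would compute $\dim([\mathbb{T}\otimes^{S}\mathbb{T}]^{\SO(3)})$ by means of Schur's lemma. Since each $\HH^{i}$ is an absolutely irreducible (real type) $\SO(3)$-representation, so that $\mathrm{End}_{\SO(3)}(\HH^{i})=\RR$, any isotropic endomorphism of $\mathbb{T}$ preserves each isotypic block $\alpha_{i}\HH^{i}$ and acts on it as an $\alpha_{i}\times\alpha_{i}$ matrix tensored with the identity of $\HH^{i}$; imposing self-adjointness (the symmetric product $\otimes^{S}$) restricts these blocks to symmetric matrices. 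Hence
\[
\dim\bigl([\mathbb{T}\otimes^{S}\mathbb{T}]^{\SO(3)}\bigr)=\sum_{i=0}^{p}\frac{\alpha_{i}(\alpha_{i}+1)}{2},
\qquad
\sharp^{\SO(3)}(\mathbb{T})=\sum_{i=0}^{p}\alpha_{i}.
\]

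Next I would carry out the elementary algebra. Subtracting the two quantities gives
\[
\dim\bigl([\mathbb{T}\otimes^{S}\mathbb{T}]^{\SO(3)}\bigr)-\sharp^{\SO(3)}(\mathbb{T})=\sum_{i=0}^{p}\frac{\alpha_{i}(\alpha_{i}-1)}{2},
\]
a sum of non-negative integers. It vanishes if and only if every summand vanishes, i.e. $\alpha_{i}\in\{0,1\}$ for all $i$. By Property \ref{prop_unic} this is exactly the condition that the harmonic decomposition of $\mathbb{T}$ is unique. Thus the dimensional identity in the statement is equivalent to the hypothesis of Theorem \ref{th1}.

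It then remains to translate uniqueness into equivalence of the two representations, and to handle the converse. In the forward direction, when all $\alpha_{i}\leq1$ each isotypic block is a single copy of $\HH^{i}$, so Schur forces $\mathrm{C}$ to act as a scalar on it; the harmonic basis $\mathcal{H}(\mathbb{T})$ is then already an eigenbasis, $C_{\mathcal{H}}$ is diagonal, $d^{\star}$ reduces to a permutation, and the Walpole and Kelvin representations coincide --- this is precisely Theorem \ref{th1}. For the converse I would argue by contraposition: if some $\alpha_{j}\geq2$, the corresponding block of $C_{\mathcal{H}}$ ranges over \emph{all} symmetric $\alpha_{j}\times\alpha_{j}$ matrices as $\mathrm{C}$ varies over $[\mathbb{T}\otimes^{S}\mathbb{T}]^{\SO(3)}$, so one may choose $\mathrm{C}$ whose block carries nonzero off-diagonal entries; then $C_{\mathcal{H}}$ is not diagonal, $d^{\star}$ must perform a genuine diagonalization inside that block, and the two decompositions differ.

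The main obstacle, to my mind, is not the dimension count but making the converse rigorous: one must verify that for $\alpha_{j}\geq2$ the intra-block off-diagonal coupling is genuinely realized by an isotropic tensor and cannot be removed by a different, equally canonical, choice of harmonic basis, and that ``$d^{\star}$ is a permutation'' is the correct formalization of ``the Walpole and Kelvin decompositions are equivalent.'' Pinning down this equivalence precisely --- that the harmonic basis fails to diagonalize $\mathrm{C}$ exactly when a multiplicity exceeds one --- is where the argument requires the most care.
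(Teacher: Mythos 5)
Your argument is correct and follows essentially the same route as the paper: Schur's lemma applied blockwise to the isotypic components of $\mathbb{T}$, a Clebsch--Gordan count of the scalar ($\mathbb{H}^{0}$) summands of $\mathbb{T}\otimes^{S}\mathbb{T}$, and reduction to the uniqueness condition $\alpha_{i}\leq 1$ of Property \ref{prop_unic} so that Theorem \ref{th1} applies. Your explicit identity $\dim([\mathbb{T}\otimes^{S}\mathbb{T}]^{\SO(3)})=\sum_{i}\alpha_{i}(\alpha_{i}+1)/2$ versus $\sharp^{\SO(3)}(\mathbb{T})=\sum_{i}\alpha_{i}$ is in fact slightly more complete than the paper's own treatment, which records only the forward implication (its Corollary) and leaves the converse --- precisely the point you rightly flag as the delicate one, since for special isotropic $\mathrm{C}$ with vanishing coupling moduli the two representations can still coincide even when some $\alpha_{j}\geq 2$ --- implicit.
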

Even if we did not investigate in that direction in the present paper, we may conjecture that the Theorem 1' can be extended to anisotropic situations, i.e.
\begin{con}\label{con}
Let consider $\mathrm{C}\in\left[\mathbb{T}\otimes^S\mathbb{T}\right]^{G}$, the space of $G$-invariant symmetric endomorphisms of $\mathbb{T}$ and $\sharp^{G}(\mathbb{T})$ be the number of $G$-irreducible components in the decomposition of $\mathbb{T}$. The Walpole and the Kelvin representation of $\mathrm{C}$ are equivalent iff
\ben
\dim(\left[\mathbb{T}\otimes^S\mathbb{T}\right]^{G})=\sharp^{G}(\mathbb{T})
\een
\end{con}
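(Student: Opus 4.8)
The plan is to translate the statement into the real representation theory of $G$ and to reduce the claimed equivalence to an elementary inequality on multiplicities. First I would fix a $G$-invariant inner product on $\mathbb{T}$ — inherited from the standard one on $\RR^{3}$, which every $G\subseteq\SO(3)$ preserves — so that ``symmetric endomorphism'' means ``self-adjoint operator'' and the identification $[\mathbb{T}\otimes^S\mathbb{T}]^{G}\cong\{\text{self-adjoint }G\text{-equivariant endomorphisms of }\mathbb{T}\}$ is canonical and $G$-equivariant. Writing the isotypic decomposition $\mathbb{T}=\bigoplus_{i}m_{i}V_{i}$ into pairwise non-isomorphic real irreducibles $V_{i}$ with multiplicities $m_{i}$, one has $\sharp^{G}(\mathbb{T})=\sum_{i}m_{i}$, and the $G$-adapted (Walpole) basis is precisely a basis subordinate to this decomposition.

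The second step is to apply the real version of Schur's lemma. Each $\mathbb{D}_{i}:=\mathrm{End}_{G}(V_{i})$ is a real division algebra, hence one of $\RR$, $\CC$ or the quaternions $\mathbb{H}$ (the Frobenius--Schur type of $V_{i}$); for $G=\SO(3)$ only the real type occurs, which is exactly why Theorem~\ref{th1} is clean, but for a general subgroup the complex type already appears, e.g. for $\SO(2)$. The $G$-equivariant endomorphisms of $\mathbb{T}$ then form $\bigoplus_{i}\mathrm{Mat}_{m_{i}}(\mathbb{D}_{i})$, and restricting to the self-adjoint ones (with respect to the invariant inner product, for which the non-real units of $\mathbb{D}_{i}$ act skew-adjointly) gives $\bigoplus_{i}\mathrm{Herm}_{m_{i}}(\mathbb{D}_{i})$. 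Computing real dimensions block by block yields
\[
\dim\bigl([\mathbb{T}\otimes^S\mathbb{T}]^{G}\bigr)=\sum_{i}\delta(m_{i},\mathbb{D}_{i}),\qquad
\delta(m,\RR)=\tfrac{m(m+1)}{2},\ \ \delta(m,\CC)=m^{2},\ \ \delta(m,\mathbb{H})=m(2m-1).
\]

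The decisive step is the elementary observation that $\delta(m,\RR)-m=\tfrac{m(m-1)}{2}$, $\delta(m,\CC)-m=m(m-1)$ and $\delta(m,\mathbb{H})-m=2m(m-1)$, so that $\delta(m,\mathbb{D})\ge m$ always, with equality \emph{iff} $m\le1$, \emph{uniformly in the division algebra type}. Summing, $\dim([\mathbb{T}\otimes^S\mathbb{T}]^{G})\ge\sharp^{G}(\mathbb{T})$, with equality iff every $m_{i}\le1$. It then remains to match this arithmetic condition with the geometric one. If all $m_{i}\le1$, every isotypic block reduces to a single copy of $V_{i}$, on which a self-adjoint equivariant endomorphism acts through $\mathrm{Herm}_{1}(\mathbb{D}_{i})=\RR$, i.e. as a real scalar $\lambda_{i}\,\Id_{V_{i}}$; hence $C_{\mathcal{H}}$ is diagonal for every $G$-invariant $\mathrm{C}$, $d^{\star}$ is a mere permutation, and the two decompositions coincide. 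Conversely, if some $m_{i}\ge2$ then a generic self-adjoint equivariant endomorphism has a non-diagonal $\mathrm{Herm}_{m_{i}}(\mathbb{D}_{i})$-block coupling distinct copies of $V_{i}$, so $C_{\mathcal{H}}$ is not diagonal and the Kelvin diagonalization $d^{\star}$ does genuine work. This establishes the equivalence.

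The main obstacle, relative to the isotropic Theorem~\ref{th2}, is the second step: for a general $G\subseteq\SO(3)$ one must invoke the full real representation theory, since $V_{i}$ may be of complex or quaternionic type, and identify the self-adjoint equivariant maps with the Hermitian-type spaces $\mathrm{Herm}_{m}(\mathbb{D})$ rather than with symmetric matrices. The technical linchpin is that the imaginary units of $\mathbb{D}_{i}$ are skew-adjoint for any $G$-invariant inner product — a short computation shows this is in fact forced — which is what guarantees that a single copy ($m=1$) contributes exactly one real parameter and acts as a genuine scalar. Once this compatibility is secured, the uniform inequality $\delta(m,\mathbb{D})\ge m$ closes the argument in all three types simultaneously, and the conjecture follows verbatim, Theorem~\ref{th2} being the purely real special case.
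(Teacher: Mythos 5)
The paper does not prove this statement: it is explicitly introduced as a conjecture (``Even if we did not investigate in that direction in the present paper, we may conjecture\ldots''), so the only thing to compare your argument against is the proof of the isotropic case (Theorems 1 and 1'). Your argument is, as far as I can verify, a correct proof of the conjecture, and it is a genuine generalization of the paper's route rather than a transposition of it. The paper counts copies of $\mathbb{H}^{0}$ in $\mathbb{T}\otimes^{S}\mathbb{T}$ via the Clebsch--Gordan product and then applies Schur's lemma; this works because every real irreducible of $\SO(3)$ is of real type, i.e. $\mathrm{End}_{\SO(3)}(\mathbb{H}^{k})=\RR$. For a proper subgroup $G$ this fails (already for $\SO(2)$ or the cyclic groups the two-dimensional irreducibles are of complex type), and your replacement of the Clebsch--Gordan count by the isotypic decomposition, real Schur's lemma, the identification of the invariant symmetric endomorphisms with $\bigoplus_{i}\mathrm{Herm}_{m_{i}}(\mathbb{D}_{i})$, and the uniform inequality $\dim_{\RR}\mathrm{Herm}_{m}(\mathbb{D})\geq m$ with equality iff $m\leq1$ is exactly the right ingredient; I checked that positivity of the adjoint anti-involution does force the imaginary units of $\mathbb{D}_{i}$ to be skew-adjoint for any invariant inner product, as you assert. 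Two points should be made explicit in a final write-up: the passage from $\mathrm{Mat}_{m_{i}}(\mathbb{D}_{i})$ to $\mathrm{Herm}_{m_{i}}(\mathbb{D}_{i})$ requires choosing the $m_{i}$ copies of $V_{i}$ mutually orthogonal and related by isometric intertwiners (always possible, but it is precisely the non-canonicity of this choice for $m_{i}\geq2$ that defeats a fixed Walpole basis); and the ``only if'' direction holds for generic $\mathrm{C}$ rather than for every individual $\mathrm{C}$ (the paper itself notes that $m_{c_{1}}=0$ makes the two decompositions coincide for that particular tensor), so your universally quantified reading of the statement is the intended one and your generic-element argument settles it.
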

For example this situation occurs for cubic classical elasticity in which the $3$ different coefficients correspond to the number of $\mathcal{O}$-irreducible spaces of $\mathbb{T}^{1}$. As well-known \cite{Wal84} in this situation the Walpole and the Kelvin's decomposition coincide.\\

Now we have to determine in which situations the condition of the former theorems is verified. The answer is given by the following theorem
\begin{thm}[Theorem 2]\label{th3}
Let $\mathbb{T}$ be a $n$th-order tensor space only endowed with minor symmetries, then the harmonic decomposition of $\mathbb{T}$ is unique or, equivalently:
\ben
\dim(\left[\mathbb{T}\otimes^S\mathbb{T}\right]^{\SO(3)})=\sharp^{\SO(3)}(\mathbb{T})
\een
if $\mathbb{T}=S^n(\RR^{3}),\  n\geq0$ or $\mathbb{T}=\otimes^2(\RR^{3})$.
\end{thm}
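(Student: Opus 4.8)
The plan is to reduce the statement to the multiplicity condition $\alpha_i\le 1$ of Property \ref{prop_unic}, first establishing the "equivalently" clause by a Schur-type computation valid for an \emph{arbitrary} $\mathbb{T}$, and then verifying multiplicity-freeness directly for the two listed families.

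First I would prove the equivalence between uniqueness of $h$ and the dimension identity for a general tensor space. Writing the harmonic decomposition as $\mathbb{T}=\bigoplus_i \alpha_i\,\mathbb{H}^i$, one has $\sharp^{\SO(3)}(\mathbb{T})=\sum_i\alpha_i$. Since each $\mathbb{H}^i$ is an absolutely irreducible real $\SO(3)$-module, Schur's lemma gives $\mathrm{End}_{\SO(3)}(\mathbb{T})=\bigoplus_i \mathrm{Mat}_{\alpha_i}(\RR)$ (equivariant maps between distinct $\mathbb{H}^i$ vanish), and the invariant inner product, which restricts to a \emph{symmetric} form on each isotypic block because these modules are of real type, identifies the symmetric equivariant endomorphisms with the symmetric matrices in the multiplicity spaces. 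Hence
\[
\dim\!\left(\left[\mathbb{T}\otimes^S\mathbb{T}\right]^{\SO(3)}\right)=\sum_i\frac{\alpha_i(\alpha_i+1)}{2}.
\]
Comparing term by term with $\sum_i\alpha_i$ and using $\alpha_i(\alpha_i+1)/2\ge\alpha_i$, with equality iff $\alpha_i\le1$, the two quantities coincide exactly when every multiplicity is at most one; by Property \ref{prop_unic} this is precisely uniqueness of $h$. This settles the equivalence for every $\mathbb{T}$.

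It then remains to check that the two spaces in the statement are multiplicity-free. For $\mathbb{T}=\otimes^2(\RR^{3})$ this is immediate from the Clebsch--Gordan product \eqref{eq:CG_dec}: identifying $\RR^3=\mathbb{H}^1$ gives $\otimes^2(\RR^{3})=\mathbb{H}^1\otimes\mathbb{H}^1=\mathbb{H}^2\oplus\mathbb{H}^1\oplus\mathbb{H}^0$, so each $\alpha_i\in\{0,1\}$. For $\mathbb{T}=S^n(\RR^{3})$ I would invoke the classical identification of $S^n(\RR^{3})$ with the homogeneous degree-$n$ polynomials on $\RR^3$, together with the standard splitting into harmonic polynomials times powers of $r^2=x_1^2+x_2^2+x_3^2$, namely $\mathcal{P}_n=\bigoplus_{j\ge0} r^{2j}\,\mathcal{H}_{n-2j}$, where $\mathcal{H}_k$ is the space of harmonic polynomials of degree $k$. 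Since multiplication by $r^{2j}$ is injective and $\mathcal{H}_k\cong\mathbb{H}^k$, one obtains $S^n(\RR^{3})=\bigoplus_{k\equiv n\,(2),\,0\le k\le n}\mathbb{H}^k$ with each summand occurring once; the dimension count $\sum(2k+1)=\binom{n+2}{2}=\dim S^n(\RR^{3})$ confirms there is no room for higher multiplicities, in agreement with Property \ref{prop1}.

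The main obstacle is the general equivalence of the second paragraph rather than the two explicit verifications: one must justify that, over $\RR$, the modules $\mathbb{H}^i$ are absolutely irreducible and of real type, so that the invariant bilinear form on each is symmetric (never antisymmetric) and the multiplicity algebra is a full real matrix algebra. With that input the symmetric-versus-full endomorphism bookkeeping yielding $\sum_i\alpha_i(\alpha_i+1)/2$ is routine, and the elementary inequality $\alpha_i(\alpha_i+1)/2\ge\alpha_i$ closes the argument. By contrast, the two families reduce either to a single application of Clebsch--Gordan or to the classical polynomial decomposition.
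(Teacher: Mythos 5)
Your proof is correct, and for the two verifications at the heart of the theorem it coincides with the paper's: $\otimes^2(\RR^{3})=\HH^{1}\otimes\HH^{1}=\HH^{2}\oplus\HH^{1}\oplus\HH^{0}$ by Clebsch--Gordan, and $S^n(\RR^{3})=\bigoplus_{k\equiv n\,(2)}\HH^{k}$ with every multiplicity equal to one. The differences lie in the surrounding structure. The paper simply cites \cite{JCB78} for the $S^n$ decomposition, frames the setup via the sandwich $S^n(\RR^{3})\subseteq\mathbb{T}\subseteq\otimes^{n}(\RR^{3})$, and additionally records (though the theorem's ``if'' does not require it) that $\otimes^{n}(\RR^{3})$ fails to be multiplicity-free for $n\geq3$. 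You instead derive the $S^n$ decomposition from the classical splitting $\mathcal{P}_n=\bigoplus_{j}r^{2j}\mathcal{H}_{n-2j}$ with a confirming dimension count, which makes that step self-contained. More substantially, you prove the ``equivalently'' clause in both directions via the closed formula $\dim\left[\mathbb{T}\otimes^S\mathbb{T}\right]^{\SO(3)}=\sum_i\alpha_i(\alpha_i+1)/2$, correctly noting that this requires each $\HH^{i}$ to be absolutely irreducible of real type so that the antisymmetric invariants vanish; the paper's corollary following Theorem~\ref{th1} only establishes the implication from uniqueness of $h$ to the dimension identity and never proves the converse. Your formula is the sharper statement: it justifies the word ``equivalently'' in the theorem, and the excess $\sum_i\alpha_i(\alpha_i-1)/2$ it exhibits is precisely the number of coupling moduli the paper is concerned with (for $\mathbb{T}^{2}=\HH^{3}\oplus\HH^{2}\oplus2\HH^{1}$ it gives $5-4=1$, matching the single coupling modulus $m_{c_{1}}$).
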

A direct application of Theorem \ref{th2} shows that only $\mathbb{T}^{1}$ meets this requirement. Therefore for $n$th-gradient elasticity,  Theorem \ref{th1} is true only for first order elasticity, and since $n=2$ there exist couplings between harmonic components of higher order stresses and strains. In classical elasticity this kind of coupling appears for transverse isotropy and lower symmetry classes (except the cubic one). In other terms, $n$th-order isotropic elasticity have the same kind of algebraic  structure as anisotropic classical elasticity.

\section{Illustrations : First and second-order elasticity}\label{sec:Ex}

In this section some practical illustrations of our results will be given. For second-order elasticity this explicit construction gives insights into the physic encoded by the second-order constitute law. As a result, we obtain the explicit expressions of five moduli, four of them related to \emph{elementary physical mechanisms} while the last one is a coupling modulus between these mechanisms. In order not to lengthen too much the section, some explicit constructions for second-order elasticity, such as the $18\times 18$ passage matrix, are postponed to the appendix.  

\subsection{$n=1$: Classical elasticity}
In classical elasticity, the constitutive relation reads\footnote{In this subsection, to avoid cumbersome notations, superscripts that indicate the order of relations are dropped down.}
\bee\label{eq:Ela_Cla}
\sigma=\mathrm{C}:\varepsilon,\quad \quad \sigma,\varepsilon\in\mathbb{T}^{1}=\mathbb{T}_{(ij)},\ \mathrm{C}\in S^2(\mathbb{T}^{(1)})=\mathbb{T}_{\underline{(ij)}\ \underline{(kl)}}
\eee
As well-known the isotropic elasticity tensor is defined by 2 coefficients, and $\mathbb{T}_{(ij)}$ contains two harmonic spaces. Hence Theorem 1' applies and the Walpole and the Kelvin's decomposition coincide.
It is known that, for isotropic behavior, $\mathrm{C}$ can be expressed in the following operational form:
\bee\label{eq:Ela_KG}
\mathrm{C}=3K\mathrm{P}^{\mathbb{H}^0}+2G\mathrm{P}^{\mathbb{H}^2}
\eee
where $K$ and $G$ are respectively known as the bulk and the shear modulus, while $\mathrm{P}^{\mathbb{H}^0}$ and $\mathrm{P}^{\mathbb{H}^2}$ are the hydrostatic and the deviatoric projectors. Therefore $\mathrm{P}^{\mathbb{H}^0}$ and $\mathrm{P}^{\mathbb{H}^2}$ are intrinsic basis vectors for isotropic classical elasticity tensors. These basis vectors satisfy the following properties:\\
\noindent$\bullet$ Partition of the identity\footnote{In this expression $\mathrm{Id}_{\mathbb{T}^{1}}$ denotes the identity operator for the space $\mathbb{T}^{1}$.}:
\ben
\mathrm{Id}_{\mathbb{T}^{1}}=\mathrm{P}^{\mathbb{H}^2}+\mathrm{P}^{\mathbb{H}^0}
\een
\noindent$\bullet$ Multiplication table:
\ben
   \begin{array}{|c||c|c|}
       \hline
       \times&\mathrm{P}^{\mathbb{H}^2}&\mathrm{P}^{\mathbb{H}^0}\\ \hline \hline
       \mathrm{P}^{\mathbb{H}^2}&\mathrm{P}^{\mathbb{H}^2}&0\\\hline
       \mathrm{P}^{\mathbb{H}^0}&0&\mathrm{P}^{\mathbb{H}^0}\\\hline
   \end{array}
\een
According to these properties the constitutive law \eqref{eq:Ela_Cla} can be broken down into two uncoupled relations
\bee\label{eq:Ela-Dec}
\begin{cases}
\sigma^{\mathbb{H}^2}=2G\varepsilon^{\mathbb{H}^2}\\
\sigma^{\mathbb{H}^0}=3K\varepsilon^{\mathbb{H}^0}
\end{cases}
\eee
This decomposition of classical elasticity is well-known and has been established by numerous methods in the literature (see e.g. \cite{Ryc84,Wal84,MC90,BBS07}).
If we consider shear and spherical strains (resp. stresses) as elementary strains (resp. stresses), this decomposition means that for isotropic system these mechanisms are not coupled. From a computational point of view, this representation has many advantages since the inversion of the constitutive matrix is direct \cite{Wal84,MB11a}.  

The following observation is important: \emph{the system \eqref{eq:Ela-Dec} means that the constitutive equation \eqref{eq:Ela_Cla} expressed in the basis induced by the harmonic decomposition of $\mathbb{T}^{1}$ is diagonal.} Therefore in this case the Walpole and Kelvin decompositions coincide. This fact was indeed clear from the Theorem 1'.

\subsection{$n=2$: Second-order elasticity}

It has been observed that for classical isotropic elasticity elementary parts of stress and strain are not coupled. Let consider now the linear relation of second order elasticity:
\ben\label{eq:Ela_Sec}
\tau=\mathrm{A}\therefore\eta,\ \quad \tau,\eta\in\mathbb{T}^{2}=\mathbb{T}_{(ij)k},\ \mathrm{A}\in S^2(\mathbb{T}^{(2)})=\mathbb{T}_{\underline{(ij)k}\ \underline{(lm)n}}
\een
where $\tau,\mathrm{A},\eta$ stand respectively for $\sigma^{(2)},\mathrm{C}^{(2)},\varepsilon^{(2)}$. Therefore both $\tau$ and $\eta$ are third-order tensors, while $\mathrm{A}$ is a sixth-order one.\\

\subsubsection*{$\mathbb{T}^{2}$ harmonic decomposition}\label{ss:DecT2}
As we are interested in the Walpole decomposition of $\mathrm{A}$, the first point is to analyze the harmonic decomposition of both $\eta$ and $\tau$.
As calculated in the table \eqref{eq:Tab_Dec}, $\mathbb{T}^{2}$ admits the following harmonic decomposition: 
\bee\label{eq:HarDec}
\mathbb{T}^{2}\cong\mathbb{H}^{3}\oplus\mathbb{H}^{2}\oplus\mathbb{H}^{1,a}\oplus\mathbb{H}^{1,b}
\eee
It can be observed that any element of this space contains $2$ vectorial parts, therefore the isomorphism associated with the decomposition is not unique. From Theorem 1 we know that the Walpole decomposition will not be diagonal and, contrary to classical elasticity, isotropic elementary couplings between harmonic spaces will occur.
This coupling only concerns the vector parts since both $\HH^{3}$ and $\HH^{2}$ are uniquely defined.
Therefore many constructions are possible, but among them some are more natural since they give a physical interpretation to the harmonic decomposition. In the following we will only consider a kinematic interpretation of the decomposition (see \cite{Auf13} for another interpretations). The approach consists in splitting $T_{(ij)k}$ first into a complete symmetric part and a remainder one before proceeding to the harmonic decomposition. This approach is summed-up by the following diagram:
\ben
\xymatrix{
       &  \mathbb{T}_{(ij)k} \ar[ld]_{\mathrm{Sym}}\ar[rd]^{\mathrm{Asym}=\mathrm{Id}-\mathrm{Sym}} &  \\
  \mathbb{S}_{(ijk)} \ar[d]^{\mathcal{H}}  &    & \mathbb{R}_{ij} \ar[d]^{\mathcal{H}} \\
  \mathbb{H}^{3}\oplus\mathbb{H}^{1}_{a}&&\mathbb{H}^{\sharp 2}\oplus\mathbb{H}^{1}_{b}}
\een
where $\mathrm{Sym}$, $\mathrm{Asym}$ and $\mathcal{H}$ respectively stand for the symmetrization, anti-symmetrization and the harmonic decomposition processes.
$T_{(ij)k}$ is first split into a full symmetric tensor and an asymmetric one:
\ben
T_{(ij)k}=S_{ijk}+\frac{1}{3}\left(\epsilon_{jkl}R_{li}+\epsilon_{ikl}R_{lj}\right)
\een 
where $\epsilon_{ijk}$ denotes the Levi-Civita symbol in 3D.
The space of full symmetric third-order tensors is $10$-dimensional meanwhile the space of the remaining one is $8$ dimensional, those spaces are in direct sum.
In the strain-gradient literature \cite{Min64} the complete symmetric part $S_{(ijk)}$, defined:
\ben
S_{(ijk)}=\frac{1}{3}(T_{(ij)k}+T_{(ki)j}+T_{(jk)i})
\een
is related to the \emph{stretch-gradient} part of $\mathrm{T}_{(ij)k}$. 
The remaining traceless non-symmetric part $R_{ij}$:
\ben
R_{ij}=\epsilon_{ipq}T_{(jp)q}
\een
is the \emph{the rotation-gradient} part of $\mathrm{T}_{(ij)k}$. At this stage, this decomposition coincides with the type III formulation of Mindlin strain gradient elasticity \cite{ME68}. According to its words, this  \emph{third form of the theory is the most convenient one for reduction to the theory in which the potential energy-density is a function of the strain and the gradient of the rotation}.  The terms $\mathrm{V}^{\nabla \mathrm{rot}}$ is the components of the strain gradient tensor that give rise to couple-stress \cite{Min64}.

This first decomposition, which is sometimes referred to as the Schur decomposition, is $\mathrm{GL}(3)$-invariant, meaning that each component is $\mathrm{GL}(3)$-irreducible. In other terms, this decomposition of the strain-gradient tensors into two \emph{mechanisms} (stretch-gradient and rotation-gradient) is preserved under any invertible transformation.
Under $\mathrm{SO}(3)$-action each part can further be decomposed into harmonic components by removing their different traces:
\begin{itemize}
\item $\mathbb{S}_{(ijk)}$ splits into a $3$rd-order deviator ($\dim  \mathbb{H}^{3}=7$) and a vector ($\dim\mathbb{H}_{a}^{1}=3$);
\item $\mathbb{R}_{ij}$ splits into a (pseudo-)deviator ($\dim  \mathbb{H}^{2}=5$) and a vector ($\dim\mathbb{H}_{b}^{1}=3$).
\end{itemize}

\noindent\textbf{\emph{Stretch-gradient tensors:}}\\
The space $\mathbb{S}_{(ijk)}$ is isomorphic to $\mathbb{H}^{3}\oplus\mathbb{H}^{1}_{\nabla str}$, hence the isomorphism that realizes the decomposition is unique. Doing some algebra we obtain
\ben
S_{(ijk)}=H_{(ijk)}+\frac{1}{5}\left(V^{\nabla str}_{i}\delta_{(jk)}+V^{\nabla str}_{j}\delta_{(ik)}+V^{\nabla str}_{k}\delta_{(ij)}\right)
\een
with 
\ben
V^{\nabla str}_{i}=S_{(pp)i}=\frac{1}{3}\left(T_{ppi}+2T_{ipp}\right)\ ;\ 
H_{(ijk)}=S_{(ijk)}-\frac{1}{5}\left(V^{\nabla str}_{i}\delta_{(jk)}+V^{\nabla str}_{j}\delta_{(ik)}+V^{\nabla str}_{k}\delta_{(ij)}\right)
\een
In this interpretation $\mathrm{V}^{\nabla str}$ is the vector part of \emph{the stretch-gradient tensor}.\\

\noindent\textbf{\emph{Rotation-gradient tensors:}}\\
The space $\mathbb{R}_{ij}$ is isomorphic to $\mathbb{H}^{2}\oplus\mathbb{H}^{1}_{\nabla rot}$, hence the isomorphism that realizes the decomposition is unique. Doing some algebra we obtain
\ben
R_{ij}=H_{(ij)}+\epsilon_{ijp}V^{\nabla rot}_{p}
\een
with 
\ben
V^{\nabla rot}_{i}=\frac{1}{2}\epsilon_{ipq}(R_{pq}-R_{qp})=\frac{1}{2}\left(T_{ppi}-T_{ipp}\right)\ ;\  H_{(ij)}=R_{ij}-\frac{1}{2}\epsilon_{ijp}V^{\nabla rot}_{p}=\frac{1}{2}(R_{pq}+R_{qp})
\een
In this formulation $\mathrm{V}^{\nabla rot}$ is the vector part of \emph{the rotation-gradient tensor}, and is embedded in the third-order tensor in the following way:
\ben
T(\V{V}^{\nabla rot})_{ijk}=\frac{1}{3}\left(-V^{\nabla rot}_{i}\delta_{(jk)}-V^{\nabla rot}_{j}\delta_{(ik)}+2V^{\nabla str}_{k}\delta_{(ij)}\right)
\een
Therefore a physical meaning can be given to the harmonic decomposition of $\mathbb{T}^{(2)}$, since the strain-gradient tensor encodes two orthogonal effects: stretch-gradient and rotation-gradient. These effects are canonically defined and preserved under invertible changes of variables. The harmonic decomposition of these elementary effects correspond to their decomposition in spherical harmonics \cite{JCB78}.\\

\subsubsection*{Walpole decomposition of second-order elasticity tensor}
Using the expressions of $H_{(ijk)}$, $V^{\nabla str}_{i}$, $R_{ij}$, $V^{\nabla rot}_{i}$ one can easily build a transformation matrix $P$ from  $\mathcal{S}(\mathbb{T}^2)$, the spatial basis of $\mathbb{T}^2$, to $\mathcal{H}(\mathbb{T}^2)$, a basis related to its harmonic decomposition  (see the Appendix for the explicit expressions of the chosen basis vectors). As the decomposition is not unique the result depends on the physical interpretation chosen for the vector parts. In this new basis, $\mathrm{A}$ has the following expression:
\ben
A_{\mathcal{H}(\mathbb{T}^2)}=
\begin{pmatrix}
m_{s_{3}}\mathbf{\mathrm{Id}}_{7}&\mathrm{0}&\mathrm{0}&\mathrm{0}\\
\mathrm{0}&m_{s_{1}}\mathbf{\mathrm{Id}}_{3}&\mathrm{0}&m_{c_{1}}\mathbf{\mathrm{Id}}_{3}\\
\mathrm{0}&\mathrm{0}&m_{r_{2}}\mathbf{\mathrm{Id}}_{5}&\mathrm{0}\\
\mathrm{0}&m_{c_{1}}\mathbf{\mathrm{Id}}_{3}&\mathrm{0}&m_{r_{1}}\mathbf{\mathrm{Id}}_{3}
\end{pmatrix}_{\mathcal{H}(\mathbb{T}^2)}
\een
where $\mathbf{\mathrm{Id}}_{k}$ indicates the $k$th-order identity matrix. In this particular basis, it clearly appears that the isotropic behavior is coupled. Hence contrary to classical elasticity (both in 2D and 3D), in the isotropic system, the matrix of $\mathrm{A}$ is not diagonal. Therefore, in the generic situation, a coupling always exists between $\mathbb{H}^{1}_{\nabla \mathrm{str}}$ and $\mathbb{H}^{1}_{\nabla \mathrm{rot}}$. A physical consequence is that, for example, even for isotropic material, pure stretch-gradient generates couple-stress \cite{Min64,Ger73}. This coupling will obviously disappear if the coupling modulus $m_{c_{1}}$ is equal to 0.\\

Using this matrix representation, and  as usually done in classical elasticity (c.f. relation \eqref{eq:Ela_KG}), $\mathrm{A}$ can be rewritten in an operational intrinsic form. To that aim let us consider $\{h_{i}\}$ the orthonormal basis of $\mathcal{H}(\mathbb{T}^2)$, this basis is defined by the concatenation of the orthonormal bases of, respectively, $\mathbb{H}^{3},\mathbb{H}^{1,\nabla str},\mathbb{H}^{2}$ and $\mathbb{H}^{1,\nabla rot}$. From this basis let define the three following sets:
\ben
\{H_{i}\}=\{h_{i}\otimes h_{i}\},\quad \{K_{i}\}=\{h_{i+7}\otimes h_{i+15}\},\quad \{K^{T}_{i}\}=\{h_{i+15}\otimes h_{i+7}\}
\een
From the first set we can define the following projectors:
\ben
\mathrm{P}^{\mathbb{H}^3}:=\sum_{i=1}^{7}H_{i},\quad \mathrm{P}^{\mathbb{H}^{1,\nabla str}}:=\sum_{i=8}^{10}H_{i},\quad \mathrm{P}^{\mathbb{H}^2}:=\sum_{i=11}^{15}H_{i},\quad \mathrm{P}^{\mathbb{H}^{1,\nabla rot}}:=\sum_{i=16}^{18}H_{i}
\een
These projectors satisfy the following property:
\ben
\mathrm{Id}_{\mathbb{T}^{2}}=\mathrm{P}^{\mathbb{H}^3}+\mathrm{P}^{\mathbb{H}^{1,\nabla str}}+\mathrm{P}^{\mathbb{H}^2}+\mathrm{P}^{\mathbb{H}^{1,\nabla rot}}
\een
To form a basis for isotropic elements of $\mathbb{C}^{(2)}$, the projectors have to be completed by the two following coupling operators\footnote{In this notation $\mathrm{Q}^{A/B}$ denotes the coupling operator from the  the source space $B$ to the destination space $A$.}:
\ben
\mathrm{Q}^{\nabla str/\nabla rot}:=\sum_{i=1}^{3}K_{i},\quad \mathrm{Q}^{\nabla rot/\nabla str}:=\sum_{i=1}^{3}K^{T}_{i}
\een
Hence, any second order elasticity tensor can be rewritten in terms of these operators:
\ben
\mathrm{A}=m_{s_{3}}\mathrm{P}^{\mathbb{H}^3}+m_{s_{1}}\mathrm{P}^{\mathbb{H}^{1,\nabla str}}+m_{c_{1}}(\mathrm{Q}^{\nabla str/\nabla rot}+\mathrm{Q}^{\nabla rot/\nabla str})+m_{r_{2}}\mathrm{P}^{\mathbb{H}^2}+m_{r_{1}}\mathrm{P}^{\mathbb{H}^{1,\nabla rot}}
\een
This relation is analogous to the relation \eqref{eq:Ela_KG} in classical elasticity.
Now, we can compute the multiplication table of the basis $(\mathrm{P}^{\mathbb{H}^3},\mathrm{P}^{\mathbb{H}^2},\mathrm{P}^{\mathbb{H}^{1,\nabla str}},\mathrm{Q}^{\nabla str/\nabla rot},\mathrm{Q}^{\nabla rot/\nabla str},\mathrm{P}^{\mathbb{H}^{1,\nabla rot}})$:
\ben
   \begin{array}{|c||c|c|c|c|c|c|}
       \hline
       \times&\mathrm{P}^{\mathbb{H}^3}&\mathrm{P}^{\mathbb{H}^2}&\mathrm{P}^{\mathbb{H}^{1,\nabla str}}&\mathrm{Q}^{\nabla str/\nabla rot}
       &\mathrm{Q}^{\nabla rot/\nabla str}&\mathrm{P}^{\mathbb{H}^{1,\nabla rot}}\\ \hline \hline
       \mathrm{P}^{\mathbb{H}^3}&\mathrm{P}^{\mathbb{H}^3}&0&0&0&0&0\\\hline
       \mathrm{P}^{\mathbb{H}^2}&0&\mathrm{P}^{\mathbb{H}^2}&0&0&0&0\\\hline
       \mathrm{P}^{\mathbb{H}^{1,\nabla str}}&0&0&\mathrm{P}^{\mathbb{H}^{1,\nabla str}}&\mathrm{Q}^{\nabla str/\nabla rot}&0&0\\\hline
       \mathrm{Q}^{\nabla str/\nabla rot}&0&0&0&0&\mathrm{P}^{\mathbb{H}^{1,\nabla str}}&\mathrm{Q}^{\nabla str/\nabla rot}\\\hline
       \mathrm{Q}^{\nabla rot/\nabla str}&0&0&\mathrm{P}^{\mathbb{H}^{1,\nabla rot}}&\mathrm{Q}^{\nabla rot/\nabla str}&0&0\\\hline
       \mathrm{P}^{\mathbb{H}^{1,\nabla rot}}&0&0&0&0&\mathrm{P}^{\mathbb{H}^{1,\nabla rot}}&\mathrm{Q}^{\nabla rot/\nabla str}\\\hline
   \end{array}
\een
This multiplication table  corresponds to the irreducible algebra of second-order elasticity tensor as identified by Monchiet and Bonnet in \cite{MB11a}. As a consequence the constitutive law \eqref{eq:Ela_Sec} can be broken down into four relations
\ben\label{eq:Ela-Dec2}
\begin{cases}
\mathrm{\tau}^{\mathbb{H}^3}=m_{s_{3}}\eta^{\mathbb{H}^3}\\
\tau^{\mathbb{H}^2}=m_{r_{2}}\eta^{\mathbb{H}^2}\\
\tau^{\mathbb{H}^{1,\nabla str}}=m_{s_{1}}\eta^{\mathbb{H}^{1,\nabla str}}+m_{c_{1}}\eta^{\mathbb{H}^{1,\nabla rot}}\\
\tau^{\mathbb{H}^{1,\nabla rot}}=m_{c_{1}}\eta^{\mathbb{H}^{1,\nabla str}}+m_{r_{1}}\eta^{\mathbb{H}^{1,\nabla rot}}
\end{cases}
\een
Following Walpole the algebra associated with the basis vectors of our representation can be decomposed into 3 irreducible algebras\footnote{The algebra we obtain is similar to the algebra obtained in classical elasticity for transverse isotropy \cite{Wal84}.}:
\begin{itemize}
\item $\mathcal{A}^{3}$, which is associated with the projection on $\HH^{3}$. Since there is an unique $\HH^{3}$ in the decomposition, $\mathcal{A}^{3}$ is unary;
\item $\mathcal{A}^{2}$, which is associated with the projection on $\HH^{2}$. Since there is an unique $\HH^{2}$ in the decomposition, $\mathcal{A}^{2}$ is unary;
\item $\mathcal{A}^{1}$, which is associated with the projection on $\mathbb{H}^{1,\nabla str}\oplus\mathbb{H}^{1,\nabla rot}$. Since there is 2 $\HH^{1}$ in the decomposition, $\mathcal{A}^{1}$ is isomorphic to the matrix algebra of 2$\times$2 matrices.
\end{itemize}
Therefore our construction based on the harmonic decomposition of $\mathbb{T}^{2}$ allows to construct the Walpole decomposition of tensor $\mathrm{C}^{(2)}=\mathrm{A}$. Our result is agreement with the previous results given by Monchiet and Bonnet in \cite{MB11a} on the same topic. However, contrary to the aforementioned paper, our construction is based on a physical interpretation of the harmonic decomposition of the strain gradient tensor. Therefore the moduli $(m_{s_{3}},m_{s_{1}},m_{r_{2}},m_{r_{1}},m_{c_{1}})$ are, in our representation, related to kinematic mechanisms.\\

In the context of the kinetic interpretation of the harmonic decomposition of $\mathbb{T}^{2}$, the moduli have the following expressions:
\begin{center}
\begin{tabular}{|c||c|c|c|}
       \hline
        Mechanism&Type&Expressions\\\hline\hline
        Stretch-gradient& $\HH^{3}$&$m_{s_{3}}=A_{111111}-A_{111221}-4A_{122133}-2A_{122331}$ \\ \cline{2-3} 
        & $\HH^{1}$ &$m_{s_{1}}= A_{111111}+\frac{2}{3}(A_{111221}+4 A_{122133}+2 A_{122331})$ \\ \hline
        \text{Rotation-gradient}& $\HH^{2}$&$m_{r_{2}}=-\frac{A_{111111}}{2}-A_{111221}+2 A_{122133}+4 A_{122331}+\frac{3 A_{221221}}{2}$ \\ \cline{2-3} 
        & $\HH^{1}$ &$m_{r_{1}}=\frac{1}{6} (-3 A_{111111}+2 A_{111221}+20 A_{122133}-8 A_{122331}+9 A_{221221})$ \\ \hline
        \text{Coupling}&$\HH^{1}$&$m_{c_{1}}=\frac{\sqrt{5}}{3}(-2 A_{111221}+4 A_{122133}+2 A_{122331})$ \\ \hline
\end{tabular}
\end{center}
The second-order elasticity tensors will become singular in the following situations:
\ben
\{m_{s_{3}}=0 ;m_{r_{2}}=0; m_{s_{1}}m_{r_{1}}-m_{c_{1}}^2=0\}
\een
The first two conditions are obvious since $m_{s_{3}}$ and $m_{r_{2}}$ are also eigenvalues of $\mathrm{A}$, the third one is equivalent to the vanishing of the determinant $\begin{vmatrix}
m_{s_{1}} & m_{c_{1}} \\
m_{c_{1}} & m_{r_{1}}
\end{vmatrix}$ of the algebra generated by the vector components of the harmonic decomposition of $\mathbb{T}^2$. As already noted if the coupling modulus $m_{c_{1}}$ is set to zero, the Walpole and the Kelvin's decomposition become equivalent.

This representation of the second order elasticity tensor gives an insigth into the physics encoded in the operator, since the physical meaning of the moduli $\{m_{i}\}$ are rather clear:
\begin{itemize}
\item $m_{s_{3}}$ and $m_{s_{1}}$ are moduli related to the stretch-gradient part of $\eta_{(ij)k}$, $m_{s_{3}}$ is linked with the septor part of the stretch-gradient while $m_{s_{1}}$ concerns the vector part. Furthermore $m_{s_{3}}$ is also an eigenvalue of the operator, hence $m_{s_{3}}$ can not vanish without making $\mathrm{A}$ singular. At the opposite, the vector part $m_{s_{1}}$ can be $0$.
\item $m_{r_{2}}$ and $m_{r_{1}}$ are moduli related to the rotation-gradient part of $\eta_{(ij)k}$, $m_{r_{2}}$ is linked with the septor part of the stretch-gradient while $m_{r_{1}}$ concerns the vector part. As before, $m_{r_{2}}$ is also an eigenvalue of the operator, hence $m_{r_{2}}$ can not vanish without making $\mathrm{A}$ singular. At the opposite, the vector part $m_{r_{1}}$ can be $0$.
\item $m_{c_{1}}$ is the coupling modulus between the rotation-gradient and the vector part of the stretch gradient.
\end{itemize}
These quantities are interesting for, at least, two reasons. First the computation of the different ratio of these quantities are necessary to quantify the relative importance of the different mechanisms and, at the end, to justify to neglect some of them.
In this view the following quantity, which measures the relative effect of rotation-gradient vs stretch-gradient can be defined :
\ben
r/s=\frac{\sqrt{m_{r_{2}}^2+m_{r_{1}}^2}}{\sqrt{m_{s_{3}}^2+m_{s_{1}}^2}}
\een
The second and closely related interest is that it makes easy to properly impose kinematic constraints. For example, if the material is only sensitive to the rotation gradient, the moduli associated with the other mechanisms should be $0$, in other terms 
\ben
m_{s_{3}}=m_{s_{1}}=m_{c_{1}}=0
\een

\section{Construction of the proofs}\label{sec:proof}

The proofs of our theorems are rather direct, and make use of the Schur's lemma. Therefore before constructing our proof, this technical lemma has to be introduced.

\subsection{The Schur's lemma}

Let us define the notion of an intertwining operator.
\begin{defn}
Let $(E_{1},\rho_{1})$ and $(E_{2},\rho_{2})$ be two $G$ representations. An linear operator $\mathrm{T}:E_{1}\rightarrow E_{2}$ is said to be an \emph{intertwining operator} between $\rho_{1}$ and $\rho_{2}$ if the following graph commutes for all $g\in G$:
\begin{equation*}\label{Fig:DiagSchur}
\xymatrix{
      E_{1}\ar[d]^{\rho_{1}}\ar[r]^{\mathrm{T}}	&E_{2}\ar[d]^{\rho_{2}} \\
\ E_{1}\ar[r]^{\mathrm{T}}	&E_{2}}
\end{equation*}
i.e. if
\begin{equation*}
 \forall g\in G,\ \rho_{2}(g)\circ\mathrm{T}=\mathrm{T}\circ\rho_{1}(g)
\end{equation*}
\end{defn}

\begin{thm}[Schur's lemma]
 Let $\mathrm{T}$ be an intertwining operator between irreducible $\mathrm{G}$-representations $(E_{1},\rho_{1})$ and $(E_{2},\rho_{2})$.
\begin{itemize}
 \item If $\rho_{1}$ and $\rho_{2}$ are inequivalent, then $\mathrm{T}=0$;
 \item If $E_{1}=E_{2}=E$ and $\rho_{1}=\rho_{2}=\rho$, then $\mathrm{T}=\lambda\mathrm{Id}_{E}$.
\end{itemize}
\end{thm}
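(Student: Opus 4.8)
The plan is to exploit the elementary fact that the kernel and the image of an intertwining operator are $G$-invariant subspaces, and then to use the irreducibility hypotheses to pin them down. First I would treat the inequivalent case. Given $\mathrm{T}\colon E_{1}\to E_{2}$ intertwining $\rho_{1}$ and $\rho_{2}$, I would check that $\ker\mathrm{T}$ is stable under $\rho_{1}$: if $\mathrm{T}v=0$ then $\mathrm{T}(\rho_{1}(g)v)=\rho_{2}(g)\mathrm{T}v=0$, so $\rho_{1}(g)v\in\ker\mathrm{T}$. Symmetrically, $\operatorname{im}\mathrm{T}$ is stable under $\rho_{2}$, since $\rho_{2}(g)\mathrm{T}v=\mathrm{T}(\rho_{1}(g)v)\in\operatorname{im}\mathrm{T}$. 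Because $(E_{1},\rho_{1})$ is irreducible, $\ker\mathrm{T}$ is either $\{0\}$ or $E_{1}$; because $(E_{2},\rho_{2})$ is irreducible, $\operatorname{im}\mathrm{T}$ is either $\{0\}$ or $E_{2}$. If $\mathrm{T}\neq 0$, then $\ker\mathrm{T}=\{0\}$ and $\operatorname{im}\mathrm{T}=E_{2}$, so $\mathrm{T}$ is a bijection and hence an isomorphism of representations, contradicting the assumed inequivalence. Therefore $\mathrm{T}=0$.

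For the second case I would reduce to the first. Working over $\CC$ (or, in the real $\SO(3)$ setting of interest here, using that the harmonic spaces $\HH^{k}$ are of real type), the endomorphism $\mathrm{T}$ of the finite-dimensional space $E$ admits at least one eigenvalue $\lambda$. Then $\mathrm{T}-\lambda\,\Id_{E}$ is again an intertwining operator, since $\Id_{E}$ commutes with every $\rho(g)$. By construction its kernel is nonzero, as it contains an eigenvector; applying the invariance argument of the first part with $E_{1}=E_{2}=E$ and $\rho_{1}=\rho_{2}=\rho$, this kernel is a nonzero $G$-invariant subspace of the irreducible space $E$, hence equals $E$. Thus $\mathrm{T}-\lambda\,\Id_{E}=0$, that is $\mathrm{T}=\lambda\,\Id_{E}$.

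The only delicate point is the existence of the eigenvalue $\lambda$ in the second part, and this is where algebraic closedness of the ground field enters. Over $\RR$ the conclusion $\mathrm{T}=\lambda\,\Id$ is not automatic for an arbitrary group, but for the representations actually used in the paper, namely the real $\SO(3)$-irreducible spaces $\HH^{k}$, it holds because these are of real type, so that the only real endomorphisms commuting with the action are the scalar multiples of $\Id$, i.e. the endomorphism algebra is exactly $\RR\,\Id$. I expect this to be the main (indeed essentially the only) obstacle; by contrast, the invariance of kernel and image used in both parts is purely formal.
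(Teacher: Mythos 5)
Your proof is correct, and it is the standard argument: the paper itself states Schur's lemma as a classical result and gives no proof of it (it is only \emph{used}, in the proof of Lemma 4.3, to conclude that an isotropic self-adjoint endomorphism of $\mathbb{H}^{k}$ is a multiple of the identity), so there is no in-paper proof to compare against. You have also correctly identified and resolved the one genuinely delicate point, namely that the scalar conclusion of the second bullet requires either an algebraically closed field or, in the real setting actually used here, the fact that the $\SO(3)$-irreducible spaces $\mathbb{H}^{k}$ are absolutely irreducible (of real type), so that their real commutant is $\RR\,\mathrm{Id}$; this is exactly the hypothesis under which the paper's application of the lemma to $\mathbb{H}^{k}$ is legitimate.
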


Now let consider an harmonic space $\mathbb{H}^{k}$, and $\mathrm{A}_{k}$ an isotropic self-adjoint endomorphism of $\mathbb{H}^{k}$ we have the following lemma:
\begin{lem}\label{lem.diag}
If $\mathrm{A}_{k}$ is an isotropic self-adjoint endomorphism of $\mathbb{H}^{k}$ then $\mathrm{A}_{k}=\lambda\mathrm{Id}_{\mathbb{H}^{k}}$.
\end{lem}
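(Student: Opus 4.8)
The plan is to read this statement as an immediate consequence of Schur's lemma, the entire content being to identify, in the present setting, the two hypotheses that the lemma requires. Concretely there are exactly two points to verify: first, that $\mathbb{H}^{k}$ carries an \emph{irreducible} $\SO(3)$-representation; second, that an \emph{isotropic} self-adjoint endomorphism is nothing but an intertwining operator of this representation with itself.

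For the first point I would invoke the structure recalled at the start of the paper. The space $\mathbb{H}^{k}$ of $k$th-order completely symmetric traceless tensors, of dimension $2k+1$, is isomorphic as an $\SO(3)$-module to the space of degree-$k$ harmonic polynomials, that is, to the standard $(2k+1)$-dimensional irreducible representation $(\mathbb{H}^{k},\rho)$ of $\SO(3)$. Hence $(\mathbb{H}^{k},\rho)$ is irreducible, which is all the first hypothesis of Schur's lemma asks for.

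For the second point I would unwind the definition of isotropy. To say that $\mathrm{A}_{k}$ is isotropic means that it is unchanged by the group action, i.e. $\rho(g)\circ\mathrm{A}_{k}\circ\rho(g)^{-1}=\mathrm{A}_{k}$ for every $g\in\SO(3)$, equivalently
\[
\forall g\in\SO(3),\quad \rho(g)\circ\mathrm{A}_{k}=\mathrm{A}_{k}\circ\rho(g).
\]
This is exactly the intertwining condition of the Definition above, taken with $E_{1}=E_{2}=\mathbb{H}^{k}$ and $\rho_{1}=\rho_{2}=\rho$. Thus $\mathrm{A}_{k}$ is an intertwining operator of the irreducible representation $(\mathbb{H}^{k},\rho)$ with itself, and the second bullet of Schur's lemma yields $\mathrm{A}_{k}=\lambda\,\mathrm{Id}_{\mathbb{H}^{k}}$ for some scalar $\lambda$.

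The only genuinely delicate point, and the one I would take care to address, is that Schur's lemma produces a scalar multiple of the identity only when the irreducible representation is \emph{absolutely} irreducible, i.e. stays irreducible after complexification; over $\RR$ alone an intertwiner could otherwise generate a nontrivial division algebra. The representations $\mathbb{H}^{k}$ are of real type, so no such obstruction arises and $\lambda$ may be taken in $\RR$, with self-adjointness guaranteeing that $\lambda$ is a genuine real eigenvalue. Alternatively, one can sidestep this subtlety altogether: self-adjointness makes $\mathrm{A}_{k}$ diagonalisable over $\RR$, isotropy forces each eigenspace to be $\SO(3)$-invariant, and irreducibility then leaves a single eigenspace equal to all of $\mathbb{H}^{k}$, which is precisely $\mathrm{A}_{k}=\lambda\,\mathrm{Id}_{\mathbb{H}^{k}}$.
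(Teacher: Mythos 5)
Your proof is correct and follows essentially the same route as the paper: isotropy is unwound into the intertwining condition $\rho(g)\circ\mathrm{A}_{k}=\mathrm{A}_{k}\circ\rho(g)$, irreducibility of $\mathbb{H}^{k}$ under $\SO(3)$ is invoked, and Schur's lemma concludes. Your closing remark on absolute irreducibility (and the alternative eigenspace argument exploiting self-adjointness) is a genuine refinement that the paper's own proof omits, since the real form of Schur's lemma by itself only yields a division algebra of intertwiners rather than a scalar.
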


\begin{proof}
To demonstrate the lemma  we need to show that $\mathrm{A}_{k}$ is an intertwining operator between the irreducible the $\SO(3)$-representation $(\mathbb{H}^{k},\rho_{k})$ and itself.
Therefore we need to  show that the following diagram commutes
\begin{equation*}\label{Fig:DiagSchur2}
\xymatrix{
      \mathbb{H}^{k}\ar[d]^{\rho_{k}}\ar[r]^{\mathrm{A}_{k}}	&\mathbb{H}^{k}\ar[d]^{\rho_{k}} \\
\ \mathbb{H}^{k}\ar[r]^{\mathrm{A}_{k}}	&\mathbb{H}^{k}}
\end{equation*}
i.e. if
\begin{equation*}
\forall g\in G, \ \rho_{k}(g)\circ\mathrm{A}_{k}=\mathrm{A}_{k}\circ\rho_{k}(g)
\end{equation*}
The former condition is equivalent to:
\begin{equation*}
 \mathrm{A}_{k}=\rho_{k}(g)\mathrm{A}_{k}\rho^{-1}_{k}(g)
\end{equation*}
This condition is satisfied for any element of the symmetry group of $\mathrm{A}_{k}$. As $\mathrm{A}_{k}$ is supposed to be isotropic the relation is verified for any $g\in\SO(3)$.
Therefore $\mathrm{A}_{k}$ is an intertwining operator between $\SO(3)$-representations. Since $\mathbb{H}^{k}$ is irreducible a direct application of Schur's lemma terminates the proof.
\end{proof}

\subsection{Main proofs}

\begin{thm}[Theorem 1]
If the harmonic decomposition of $\mathbb{T}$ is unique, and if $\mathrm{C}$ is isotropic then its matrix representation $C_{\mathcal{H}}$ is diagonal. In such a case $d^{\star}$ reduces to a permutation and the Walpole and the Kelvin decompositions are equivalent.
\end{thm}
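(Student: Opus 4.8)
The plan is to recognize that an isotropic $\mathrm{C}$ is nothing but an $\SO(3)$-equivariant self-adjoint endomorphism of $\mathbb{T}$, i.e. an intertwining operator of $\mathbb{T}$ with itself in the sense of the definition above, so that the whole statement reduces to reading off the block structure of $\mathrm{C}$ in the harmonic basis through Schur's lemma. First I would fix the harmonic decomposition $\mathbb{T}=\bigoplus_{i}\HH^{i}$ and, invoking the uniqueness hypothesis together with Property \ref{prop_unic}, observe that each harmonic order occurs at most once ($\alpha_{i}\leq 1$), so the summands are pairwise inequivalent $\SO(3)$-representations. Writing $\mathrm{C}$ in block form $\mathrm{C}=(\mathrm{C}_{ij})$ relative to this decomposition, with $\mathrm{C}_{ij}\colon\HH^{j}\to\HH^{i}$, the isotropy of $\mathrm{C}$ forces each block to be itself an intertwining operator between the irreducibles $\HH^{j}$ and $\HH^{i}$.

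Next I would apply Schur's lemma block by block. For $i\neq j$ the spaces $\HH^{i}$ and $\HH^{j}$ carry inequivalent $\SO(3)$-representations, since distinct harmonic orders are inequivalent, so the first alternative of Schur's lemma gives $\mathrm{C}_{ij}=0$. For $i=j$ the diagonal block is an isotropic self-adjoint endomorphism of $\HH^{i}$, so Lemma \ref{lem.diag} applies directly and yields $\mathrm{C}_{ii}=\lambda_{i}\Id_{\HH^{i}}$. Consequently, in the harmonic basis $\mathcal{H}(\mathbb{T})$ the matrix $C_{\mathcal{H}}$ is block-diagonal with each block a scalar multiple of an identity block, hence diagonal.

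The only genuine obstacle, and the precise point where the hypothesis enters, is the step excluding off-diagonal coupling: it is exactly when a harmonic order appears with multiplicity $\alpha_{i}\geq 2$ that an off-diagonal block between two distinct copies of $\HH^{i}$ would be an intertwiner of \emph{equivalent} irreducibles, which Schur's lemma permits to be a nonzero scalar multiple of an isomorphism — this is the coupling seen for $\mathbb{T}^{2}$. The uniqueness assumption $\alpha_{i}\leq 1$ is designed precisely to forbid this, leaving only the diagonal self-pairings treated by Lemma \ref{lem.diag}.

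Finally, for the concluding assertion I would argue that, since $C_{\mathcal{H}}$ is already diagonal, the diagonalization $d^{\star}$ of $C_{\mathcal{H}}$ has nothing to perform beyond possibly reordering its entries, so $d^{\star}$ reduces to a permutation. The eigenbasis $\mathcal{E}(\mathrm{C})$ therefore coincides with the harmonic basis $\mathcal{H}(\mathbb{T})$ up to this permutation, and by the commutativity $d=d^{\star}\circ\phi(h)$ established above the Walpole and the Kelvin decompositions are equivalent.
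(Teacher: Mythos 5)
Your proof is correct, and it reaches the conclusion by a route that is recognizably different in its middle step from the paper's. The paper does not work blockwise on the operator: it instead decomposes the \emph{space} $\mathbb{T}\otimes^S\mathbb{T}$ by the Clebsch--Gordan product into self-product terms $\mathbb{H}^{\alpha_k}\otimes^S\mathbb{H}^{\alpha_k}$ and cross-product terms $\mathbb{H}^{\alpha_i}\otimes\mathbb{H}^{\alpha_j}$, observes that an $\mathbb{H}^{0}$ (hence an isotropic coefficient) can only arise from a product of two harmonic spaces of the same order, and then uses the uniqueness hypothesis to conclude that no cross term contributes; the surviving diagonal blocks are handled by Lemma \ref{lem.diag}, exactly as you do. Your argument replaces the invariant-counting step by a direct application of the first alternative of Schur's lemma to each off-diagonal block $\mathrm{C}_{ij}\colon\HH^{\alpha_j}\to\HH^{\alpha_i}$, which is cleaner and makes the role of the hypothesis $\alpha_i\leq 1$ completely explicit (an off-diagonal intertwiner between two \emph{equivalent} copies need not vanish, which is precisely the coupling seen for $\mathbb{T}^{2}$). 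The two arguments are dual to one another, since $\dim\mathrm{Hom}_{\SO(3)}(\HH^{\alpha_i},\HH^{\alpha_j})$ equals the multiplicity of $\HH^{0}$ in $\HH^{\alpha_i}\otimes\HH^{\alpha_j}$; what the paper's version buys is that the same computation immediately yields the dimension count $\dim([\mathbb{T}\otimes^S\mathbb{T}]^{\SO(3)})=\sharp^{\SO(3)}(\mathbb{T})$ needed for the Corollary and for Theorem 1', whereas your version isolates the diagonality statement with less machinery. Your handling of the final assertion about $d^{\star}$ reducing to a permutation matches the paper's.
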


\begin{proof}
Let consider an $\SO(3)$-irreducible decomposition of $\mathbb{T}$, this decomposition is supposed to be constructed from an isomorphism $h$:
\ben
\mathbb{T}\underset{h}{=}\bigoplus_{k=1}^{p}\mathbb{H}^{\alpha_{k}};\quad \alpha_{p}=n 
\een
where $\{\alpha_{k}\}$ is an increasing series that contains $p$ terms. If $\sharp(\mathbb{T})$ denotes the number of different irreducible spaces involved in the decomposition, we have $\sharp(\mathbb{T})=p$.\\

Now we have to compute the Clebsch-Gordan decomposition of $\mathbb{T}\otimes^S\mathbb{T}$.
Since $\mathbb{T}=\bigoplus_{k=1}^{p}\mathbb{H}^{\alpha_{k}}$, the symmetric tensor product $\mathbb{T}\otimes^S\mathbb{T}$ can be decomposed into a direct sum as: 
\bee\label{dec_prod}
\mathbb{T}\otimes^S\mathbb{T}=\left(\bigoplus_{k=1}^{p}\mathbb{H}^{\alpha_{k}}\right)\otimes^{S}\left(\bigoplus_{k=1}^{p}\mathbb{H}^{\alpha_{k}}\right)=\underbrace{\left(
\bigoplus_{k=1}^{p}\mathbb{H}^{\alpha_{k}}\otimes^S \mathbb{H}^{\alpha_{k}}\right)}_{\mathrm{S}} \oplus \underbrace{\left(\bigoplus_{1\leq i<j \leq p}\mathbb{H}^{\alpha_{i}}\otimes \mathbb{H}^{\alpha_{j}}\right)}_{\mathrm{C}}
\eee
This expression shows that any symmetric endomorphism of $\mathbb{T}$ can be decomposed into a direct sum of:
\begin{enumerate}
\item $\mathrm{S}$-term: $p$ self-adjoint endomorphisms between $\mathbb{H}^{\alpha_{k}}$;
\item $\mathrm{C}$-term: $\frac{p(p-1)}{2}$ coupling morphisms between the $p$ irreducible spaces.
\end{enumerate}
Each of these terms can be decomposed into $\SO(3)$-irreducible spaces, this is done using the following Clebsch-Gordan rules:
\begin{itemize}
\item for the classical product: $\mathbb{H}^{p}\otimes\mathbb{H}^{q}=\bigoplus_{i=|p-q|}^{p+q}\mathbb{H}^{i}$
\item for the symmetric product: $\mathbb{H}^{k}\otimes^{S}\mathbb{H}^{k}=\bigoplus_{i=0}^{k}\mathbb{H}^{2i}$
\end{itemize}
An isotropic space $\left[\mathbb{T}\otimes^S\mathbb{T}\right]^{\SO(3)}$ can only contain scalars, i.e. harmonic spaces of order $0$. Looking at the Clebsch-Gordan products, it appears that only same order products can generate scalar terms.  In the relation \eqref{dec_prod} it can be seen that same order products can either be produced by self-products ($\mathrm{S}$-term) or cross-products between same order terms ($\mathrm{C}$-term).
\ben\label{dec_prod2}
[\mathbb{T}\otimes^S\mathbb{T}]^{\SO(3)}=\bigoplus_{k=1}^{q}\mathbb{H}^{0,k}=\underbrace{\left(\bigoplus_{i=1}^{p}\mathbb{H}^{0,i}\right)}_{\mathrm{S}}\oplus\underbrace{\left(\bigoplus_{j=1}^{q-p}\mathbb{H}^{0,j}\right)}_{\mathrm{C}}
\een
Let us consider now the hypothesis that the decomposition of $\mathbb{T}$ is unique. It is known that it is the case iff the order of spaces in the harmonic decomposition of $\mathbb{T}$ are all different. This hypothesis means:
\ben
\forall k\in[1,p-1],\ \alpha_{k+1}>\alpha_{k}
\een
As a consequence in the relation \eqref{dec_prod} there is no cross-product between same order terms and, hence no isotropic coupling between harmonic spaces. Under this hypothesis we therefore have
\bee\label{dec_prod3}
[\mathbb{T}\otimes^S\mathbb{T}]^{\SO(3)}=\underbrace{\left(
\bigoplus_{k=1}^{p}\mathbb{H}^{\alpha_{k}}\otimes^S \mathbb{H}^{\alpha_{k}}\right)}_{\mathrm{S}}=\underbrace{\left(\bigoplus_{i=1}^{p}\mathbb{H}^{0,i}\right)}_{\mathrm{S}}
\eee
Therefore any $\mathrm{C}\in[\mathbb{T}\otimes^S\mathbb{T}]^{\SO(3)}$ is defined by a collection $\{\mathrm{C}_{k}\}$ of $p$ self-adjoint isotropic endomorphisms. A direct application of the lemma \ref{lem.diag}, shows that for all $k$, $\mathrm{C}_{k}=\lambda\mathrm{Id}_{\mathbb{H}^{k}}$. Therefore under the condition of the theorem the matrix of $\mathrm{C}$ expressed in $\mathcal{H}(\mathbb{T})\otimes^S\mathcal{H}(\mathbb{T})$ is diagonal. Since the matrix of $\mathrm{C}$ expressed in $\mathcal{E}(\mathrm{C})\otimes^S\mathcal{E}(\mathrm{C})$ is also diagonal, they may differ by a permutation. Therefore, in such a case, $d^{\star}$ reduces to a permutation, and the Kelvin and Walpole decompositions are equivalent.
\end{proof}

\begin{cor}
If the harmonic decomposition of $\mathbb{T}$ is unique, then $\dim(\left[\mathbb{T}\otimes^S\mathbb{T}\right]^{\SO(3)})=\sharp(\mathbb{T})$.
\end{cor}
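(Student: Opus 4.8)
The plan is to obtain the equality as an immediate byproduct of the scalar count already performed in the proof of Theorem \ref{th1}, so the whole argument reduces to a piece of dimension bookkeeping. The guiding principle I would use is that, for any $\SO(3)$-representation, the dimension of its invariant subspace equals the number of trivial irreducible summands $\HH^0$ occurring in its decomposition. This holds because a nontrivial irreducible $\HH^k$ with $k\geq 1$ admits no nonzero $\SO(3)$-fixed vector, whereas every vector of $\HH^0$ is fixed; hence the invariants are exactly the direct sum of the $\HH^0$ pieces, and each such piece is one-dimensional.

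First I would carry over relation \eqref{dec_prod3}, which is valid precisely under the uniqueness hypothesis. Since uniqueness forces the orders $\alpha_1<\dots<\alpha_p$ to be pairwise distinct, no cross-product $\HH^{\alpha_i}\otimes\HH^{\alpha_j}$ with $i<j$ in the $\mathrm{C}$-term of \eqref{dec_prod} can output a scalar, so all invariants are carried by the self-products collected in the $\mathrm{S}$-term. Next I would invoke the symmetric Clebsch--Gordan rule $\HH^{k}\otimes^S\HH^{k}=\bigoplus_{i=0}^{k}\HH^{2i}$: each self-product $\HH^{\alpha_k}\otimes^S\HH^{\alpha_k}$ contains exactly one trivial summand, namely the $i=0$ term. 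Summing over $k=1,\dots,p$ yields exactly $p$ copies of $\HH^0$ inside $[\mathbb{T}\otimes^S\mathbb{T}]^{\SO(3)}$.

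Combining the two steps gives $\dim\!\left([\mathbb{T}\otimes^S\mathbb{T}]^{\SO(3)}\right)=p$, and since $\sharp(\mathbb{T})=p$ is by definition the number of harmonic components in the decomposition of $\mathbb{T}$, the stated equality follows. I do not expect any genuine obstacle here: the corollary is simply a counting consequence of the decomposition already displayed in \eqref{dec_prod3}, and the only point deserving an explicit word is the identification of the invariant dimension with the number of $\HH^0$ summands, which is the representation-theoretic content underlying Lemma \ref{lem.diag} and Schur's lemma.
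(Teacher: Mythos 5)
Your argument is correct and follows essentially the same route as the paper: both rely on relation \eqref{dec_prod3} to conclude that under the uniqueness hypothesis all scalar ($\mathbb{H}^{0}$) summands arise from the self-products in the $\mathrm{S}$-term, one per harmonic component, giving $p=\sharp(\mathbb{T})$ invariant dimensions. Your write-up is in fact slightly more explicit than the paper's, in that you spell out that each $\mathbb{H}^{\alpha_k}\otimes^S\mathbb{H}^{\alpha_k}$ contributes exactly one trivial summand and that the invariant dimension equals the number of $\mathbb{H}^{0}$ copies, but the underlying idea is identical.
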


\begin{proof}
It is a direct consequence of the previous theorem. Getting back at the relation \eqref{dec_prod3}, it shows that if $\mathbb{T}$ is unique there is not coupling between harmonic spaces. Therefore the isotropic components $\mathbb{H}^{0}$ are generated by self products of the elements of the harmonic decomposition of $\mathbb{T}$. Hence an element $\mathrm{C}\in \left[\mathbb{T}\otimes^S\mathbb{T}\right]^{\SO(3)}$ has as many components as the number of harmonic components in  $\mathbb{T}$.
\end{proof}

\begin{thm}[Theorem 2]
Let $\mathbb{T}$ be a $n$th-order tensor space only endowed with minor symmetries, then the harmonic decomposition of $\mathbb{T}$ is unique or, equivalently:
\ben
\dim(\left[\mathbb{T}\otimes^S\mathbb{T}\right]^{\SO(3)})=\sharp^{\SO(3)}(\mathbb{T})
\een
if $\mathbb{T}=S^n(\RR^{3}),\  n\geq0$ or $\mathbb{T}=\otimes^2(\RR^{3})$.
\end{thm}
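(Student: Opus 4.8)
The plan is to split the statement into two parts: the equivalence between uniqueness of the harmonic decomposition and the dimension identity, and then the genuine classification of the multiplicity-free spaces carrying only minor symmetries.

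For the equivalence I would write $\mathbb{T}=\bigoplus_{i} m_{i}\mathbb{H}^{i}$ with multiplicities $m_{i}$, so that $\sharp^{\SO(3)}(\mathbb{T})=\sum_{i}m_{i}$. By Schur's lemma (Lemma~\ref{lem.diag}) an isotropic self-adjoint endomorphism of $\mathbb{T}$ is block-diagonal along the isotypic components and acts as an arbitrary symmetric $m_{i}\times m_{i}$ matrix on the $m_{i}$ copies of $\mathbb{H}^{i}$; hence $\dim([\mathbb{T}\otimes^S\mathbb{T}]^{\SO(3)})=\sum_{i}\binom{m_{i}+1}{2}$. Subtracting, the identity $\dim([\mathbb{T}\otimes^S\mathbb{T}]^{\SO(3)})=\sharp^{\SO(3)}(\mathbb{T})$ is equivalent to $\sum_{i}\binom{m_{i}}{2}=0$, i.e. to $m_{i}\leq 1$ for every $i$, which is precisely uniqueness (Property~\ref{prop_unic}). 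This first part is routine.

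The substantive part is the classification. A tensor space carrying only minor symmetries is a product of completely symmetric blocks, $\mathbb{T}=S^{a_{1}}(\RR^{3})\otimes\cdots\otimes S^{a_{r}}(\RR^{3})$ with $a_{i}\geq 1$ and total order $n=\sum_{i}a_{i}$, each factor decomposing as $S^{a_{i}}=\bigoplus_{j\geq 0}\mathbb{H}^{a_{i}-2j}$ by Property~\ref{prop1}, with top component $\mathbb{H}^{a_{i}}$. The key observation is that the two highest-order components $\mathbb{H}^{n-1}$ and $\mathbb{H}^{n-2}$ can only be produced from the top pieces of the factors, since replacing any $\mathbb{H}^{a_{i}}$ by a lower piece $\mathbb{H}^{a_{i}-2}$ drops the maximal attainable order by at least two. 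The first step is therefore to compute the multiplicity of $\mathbb{H}^{n-1}$ in $\mathbb{H}^{a_{1}}\otimes\cdots\otimes\mathbb{H}^{a_{r}}$ by a weight count: the weight $n$ occurs once and the weight $n-1$ occurs $r$ times (lower exactly one factor by a single unit), so the multiplicity of $\mathbb{H}^{n-1}$ equals $r-1$. Uniqueness then forces $r\leq 2$.

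It remains to treat $r\in\{1,2\}$. For $r=1$ one has $\mathbb{T}=S^{n}$, whose decomposition $\bigoplus_{j\geq 0}\mathbb{H}^{n-2j}$ has pairwise distinct orders and is thus multiplicity-free. For $r=2$, writing $\mathbb{T}=S^{a}\otimes S^{b}$, the second step is to count $\mathbb{H}^{n-2}$: the top–top product $\mathbb{H}^{a}\otimes\mathbb{H}^{b}$ yields it once, and each factor admitting a nontrivial trace contributes one further copy through $\mathbb{H}^{a-2}\otimes\mathbb{H}^{b}$ or $\mathbb{H}^{a}\otimes\mathbb{H}^{b-2}$, so its multiplicity is $1+\mathbf{1}_{\{a\geq 2\}}+\mathbf{1}_{\{b\geq 2\}}$. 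Uniqueness forces $a=b=1$, that is $\mathbb{T}=\otimes^2(\RR^{3})$, whose decomposition $\mathbb{H}^{2}\oplus\mathbb{H}^{1}\oplus\mathbb{H}^{0}$ is indeed multiplicity-free; the converse (sufficiency) is covered by these same two computations. The main obstacle I anticipate is the Clebsch--Gordan bookkeeping—rigorously justifying that only top pieces feed $\mathbb{H}^{n-1}$ and $\mathbb{H}^{n-2}$ and carrying out the weight-multiplicity count cleanly—together with pinning down that ``only minor symmetries'' indeed means an unconstrained tensor product of symmetric powers.
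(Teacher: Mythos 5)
Your proposal is correct, and it takes a genuinely different and more complete route than the paper. The paper's proof is a direct verification of the two sufficient cases: it notes the sandwich $S^n(\RR^{3})\subseteq\mathbb{T}\subseteq\otimes^{n}(\RR^{3})$, observes that $S^n(\RR^{3})=\bigoplus\HH^{n-2k}$ has pairwise distinct orders (hence is multiplicity-free), computes $\otimes^{1},\otimes^{2},\otimes^{3}$ explicitly to see that uniqueness fails from $\otimes^{3}$ onward, and stops there; the ``or, equivalently'' clause is only supported by the one-directional Corollary following Theorem~1. You instead prove two strictly stronger statements. First, your dimension count $\dim([\mathbb{T}\otimes^S\mathbb{T}]^{\SO(3)})=\sum_{i}\binom{m_{i}+1}{2}$ via Schur's lemma gives the genuine \emph{equivalence} between uniqueness and the dimension identity, which the paper asserts but does not fully establish. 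Second, by writing a minor-symmetric space as $S^{a_{1}}(\RR^{3})\otimes\cdots\otimes S^{a_{r}}(\RR^{3})$ and counting the multiplicities of $\HH^{n-1}$ (equal to $r-1$, by the weight count on the top blocks) and of $\HH^{n-2}$ (equal to $1+\mathbf{1}_{\{a\geq2\}}+\mathbf{1}_{\{b\geq2\}}$ when $r=2$), you obtain the full \emph{if and only if} classification over all such spaces, covering the intermediate cases $S^{a}\otimes S^{b}$ that the paper's proof silently skips; this is exactly what is needed to justify the later claim that \emph{only} $\mathbb{T}^{1}$ meets the requirement. Your counts check out (in particular $\HH^{n-1}$ can only arise from the all-top term, and the top-top product $\HH^{a}\otimes\HH^{b}$ does contain $\HH^{a+b-2}$ since $a+b-2\geq|a-b|$ when $a,b\geq1$); only the phrase ``$\HH^{n-2}$ can only be produced from the top pieces'' is loosely worded, since singly-traced terms do contribute to it --- but your explicit $r=2$ count accounts for them correctly. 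The price of your approach is the Clebsch--Gordan and weight-multiplicity bookkeeping; what it buys is the converse direction and the treatment of all minor-symmetric spaces, not just the two extremes of the sandwich.
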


\begin{proof}
Let $\mathbb{T}$ be a subset of $G^{n}=\otimes^{n}(\RR^{3})$ the space of $n$th-order tensor. If this subspace is only defined in terms of index symmetries\footnote{This hypothesis excludes, for example, traceless subspace.} we have:
\ben
S^n(\RR^{3})\subseteq\mathbb{T}\subseteq\otimes^{n}(\RR^{3})
\een
With the following important property \cite{JCB78}:
\ben
S^n(\RR^{3})=
\begin{cases}
\bigoplus_{k=0}^{\frac{n}{2}}\HH^{2k}\ \text{if $n$ even}\\
\bigoplus_{k=0}^{\frac{n-1}{2}}\HH^{2k+1}\ \text{if $n$ odd}\\
\end{cases}
\een
Therefore if $\mathbb{T}=S^n(\RR^{3})$ its harmonic decomposition of $\mathbb{T}$ is unique. At the opposite if $\mathbb{T}=\otimes^{n}(\RR^{3})$, the harmonic decomposition of $\mathbb{T}$ is unique provided $n<3$. This fact is easy to verify, lets compute some decompositions using the Clebsch-Gordan product:
\ben
\otimes^{1}(\RR^{3})=\HH^{1}\quad;\quad \otimes^{2}(\RR^{3})=\HH^{0}\oplus\HH^{1}\oplus\HH^{2}\quad;\quad \otimes^{2}(\RR^{3})=2\HH^{0}\oplus3\HH^{1}\oplus2\HH^{2}\oplus\HH^{3}
\een
Therefore the harmonic decomposition of $\otimes^{3}(\RR^{3})$ is not unique, and since for $n>3$, $\otimes^{n}(\RR^{3})=(\otimes^{n-3}(\RR^{3}))\otimes(\otimes^{3}(\RR^{3}))$, the harmonic decomposition of $\otimes^{n>3}(\RR^{3})$ is not unique.
\end{proof}

\section{Conclusion}
In this paper the algebraic structure of the isotropic $n$th-order gradient elasticity has been investigated. Studying the harmonic tensor decomposition of the $n$-th order strain gradient tensor, it has been shown that, and contrary to classical elasticity, higher-order isotropic constitutive tensors always describe coupled mechanisms. As demonstrated, this fact indeed occurs each time the number of isotropic coefficients exceeds the number of elementary spaces in the harmonic decomposition of the $n$-th order strain-gradient tensor. Besides this general result, an explicit construction of this phenomenon has been provided for Mindlin strain-gradient elasticity. This construction is important for at least two reasons:
\begin{itemize}
\item its principle is general and can be applied in many other situations;
\item for strain-gradient elasticity, it provides a physical interpretation to higher-order moduli. Their explicit knowledge is both interesting to analyze the second-order kinematic of a microstructured media and also to impose (if needed) kinematic constrains on the behavior.
\end{itemize}
It has to be noticed that, as the harmonic decomposition of strain-gradient tensors is not unique, many other interpretations of this decomposition can be proposed. This situation has to be compared with the multiple choices one can made for the classical isotropic moduli. The use of these \emph{physically-based} moduli may be a powerful tool to study the degeneracy that occurs in higher-order continua.

\appendix

\section{Matrix representations of strain-gradient elasticity: main results}\label{Sec:MatRep}

Any second order elasticity tensor $\mathrm{A}$ can be represented by a matrix relative to an orthonormal basis $\{\mathbf{e}_{1},\mathbf{e}_{2},\mathbf{e}_{3}\}$. This question has been fully investigated in \cite{ALH13}. Here we sum-up the main results. As expressed by the relation \eqref{eq:Ela_Sec},  the second-order elasticity tensor $\mathrm{A}$ is a self-adjoint endomorphism of $\mathbb{T}_{(ij)k}$. In order to express the strain gradient $\mathbf{\eta}$
\ or the hyperstress tensor $\mathbf{\tau}$ as a 18-dimensional vector and write $\mathrm{A}$\ as a symmetric matrix, we introduce the following orthonormal basis vectors:%
\begin{equation}
\mathbf{\hat{e}}_{\alpha }=\left( \frac{1-\delta _{ij}}{\sqrt{2}}+\frac{%
\delta _{ij}}{2}\right) \left( \mathbf{e}_{i}\otimes \mathbf{e}_{j}+\mathbf{e%
}_{j}\otimes \mathbf{e}_{i}\right) \otimes \mathbf{e}_{k},\quad 1\leq \alpha
\leq 18  \label{Basis}
\end{equation}%
where the Einstein summation convention does not apply. Then, the aforementioned tensors can be expressed as 
\begin{equation}
\mathbf{\eta}=\displaystyle\sum\limits_{\alpha =1}^{18}\hat{\eta}_{\alpha }\mathbf{\hat{e}}_{\alpha },\text{ \ \ \ }
\mathbf{\tau }=\displaystyle\sum\limits_{\alpha =1}^{18}\hat{\tau}_{\alpha }\mathbf{\hat{e}}_{\alpha },\text{ \ \ \ }
\mathrm{A}=\displaystyle\sum\limits_{\alpha ,\beta=1}^{18}\hat{A}_{\alpha \beta }\mathbf{\hat{e}}_{\alpha }\otimes \mathbf{\hat{e}}_{\beta },  \label{wtA}
\end{equation}%
so that the relation \eqref{eq:Ela_Sec} can be conveniently written in the matrix form
\begin{equation}
\hat{\tau}_{\alpha }=\hat{A}_{\alpha \beta }\hat{\eta}_{\beta}.
\label{SGER}
\end{equation}%
Using the orthonormal basis (\ref{Basis}), the relationship between the matrix components $\hat{\eta}_{\alpha }$ and $\eta _{ijk}$
\begin{equation}
\hat{\eta}_{\alpha }=%
\begin{cases}
\eta _{ijk}\text{ \ if \ }i=j, \\ 
\sqrt{2}\eta _{ijk}\text{ \ if \ }i\neq j;%
\end{cases}%
\label{3-to-1}
\end{equation}%
and, obviously the same relation  between $\hat{\tau}_{\alpha }$ and $\tau _{ijk}$ hold. Fot the constitutive tensor we have the following correspondence.
\begin{equation}
\hat{A}_{\alpha \beta }=%
\begin{cases}
A_{ijklmn}\text{ \ if \ }i=j\ \text{and}\ l=m\text{,} \\ 
\sqrt{2}A_{ijklmn}\text{ \ if \ }i\neq j\ \text{and}\ l=m\ \text{or}\ i=j\ 
\text{and}\ l\neq m, \\ 
2A_{ijklmn}\text{ \ if \ }i\neq j\ \text{and}\ l\neq m.%
\end{cases}
\label{6-to-3}
\end{equation}
It remains to choose an appropriate three-to-one subscript correspondence between $ijk$ and $\alpha $. The correspondence specified in Table 1 was chosen in order to make the 6th order tensor block diagonal for diehdral classes \cite{ALH13}.
\begin{table}[tbp]
\begin{center}
\begin{tabular}{|c||c|c|c|c|c||c|}
\hline
$\alpha$ & $1$ & $2$ & $3$&$ 4 $&$ 5 $&  \\ \hline\hline
$ijk $&$ 111 $&$ 221 $&$ 122 $&$ 331 $&$ 133$ & Privileged direction: $1$ \\ 
\hline\hline
$\alpha $&$ 6 $&$ 7 $&$ 8 $&$ 9 $&$ 10$ &  \\ \hline
$ijk $&$ 222 $&$ 112 $&$ 121 $&$ 332 $&$ 233$ & Privileged direction: $2$ \\ 
\hline\hline
$\alpha $&$ 11 $&$ 12 $&$ 13 $&$ 14 $&$ 15$ &  \\ \hline
$ijk $&$ 333 $&$ 113 $&$ 131 $&$ 223 $&$ 232$ & Privileged direction: $3$ \\ 
\hline\hline
$\alpha $&$ 16 $&$ 17 $&$ 18$ &  &  &  \\ \hline
$ijk $&$ 123 $&$ 132 $&$ 231$ &  &  & No privileged direction \\ \hline
\end{tabular}
\end{center}
\caption{The three-to-one subscript correspondence for 3D strain-gradient elasticity}
\end{table}
This constitute the so-called spatial basis $\mathcal{S}$ discussed in the main part of the paper. The knowledge ogf the harmonic decomposition of $\mathbb{T}_{(ij)k}$, studied in section \ref{ss:DecT2}, allows to construct a transformation matrix from $\mathcal{S}$ to $\mathcal{H}(\mathbb{T}_{(ij)k})$. This matrix is decomposed here into four elementary matrices, each of them associated with an harmonic space of $\mathbb{T}_{(ij)k}$.
\ben
[\mathrm{P}_{\mathbb{H}^3}]=
\begin{pmatrix}
 \sqrt{\frac{2}{5}} & 0 & 0 & 0 & 0 & 0 & 0\\
 -\frac{1}{\sqrt{10}} & \frac{1}{\sqrt{6}} & 0 & 0 & 0 & 0 &0\\
 -\frac{1}{\sqrt{5}} & \frac{1}{\sqrt{3}} & 0 & 0 & 0 & 0 & 0 \\
 -\frac{1}{\sqrt{10}} & -\frac{1}{\sqrt{6}} & 0 & 0 & 0 & 0 & 0 \\
 -\frac{1}{\sqrt{5}} & -\frac{1}{\sqrt{3}} & 0 & 0 & 0 & 0 & 0 \\
 0 & 0 & \sqrt{\frac{2}{5}} & 0 & 0 & 0 & 0  \\
 0 & 0 & -\frac{1}{\sqrt{10}} & \frac{1}{\sqrt{6}} & 0 & 0 & 0\\
 0 & 0 & -\frac{1}{\sqrt{5}} & \frac{1}{\sqrt{3}} & 0 & 0 & 0 \\
 0 & 0 & -\frac{1}{\sqrt{10}} & -\frac{1}{\sqrt{6}} & 0 & 0 & 0 \\
 0 & 0 & -\frac{1}{\sqrt{5}} & -\frac{1}{\sqrt{3}} & 0 & 0 & 0 \\
 0 & 0 & 0 & 0 & \sqrt{\frac{2}{5}} & 0  \\
 0 & 0 & 0 & 0 & -\frac{1}{\sqrt{10}} & \frac{1}{\sqrt{6}} & 0 \\
 0 & 0 & 0 & 0 & -\frac{1}{\sqrt{5}} & \frac{1}{\sqrt{3}} & 0 \\
 0 & 0 & 0 & 0 & -\frac{1}{\sqrt{10}} & -\frac{1}{\sqrt{6}} & 0 \\
 0 & 0 & 0 & 0 & -\frac{1}{\sqrt{5}} & -\frac{1}{\sqrt{3}} & 0 \\
 0 & 0 & 0 & 0 & 0 & 0 & \frac{1}{\sqrt{3}} \\
 0 & 0 & 0 & 0 & 0 & 0 & \frac{1}{\sqrt{3}} \\
 0 & 0 & 0 & 0 & 0 & 0 & \frac{1}{\sqrt{3}} 
 \end{pmatrix}\ ,\ 
[\mathrm{P}_{\mathbb{H}^{1,\nabla str}}]=
\begin{pmatrix}
  \sqrt{\frac{3}{5}} & 0 & 0 \\
 \frac{1}{\sqrt{15}} & 0 & 0 \\
 \sqrt{\frac{2}{15}} & 0 & 0 \\
 \frac{1}{\sqrt{15}} & 0 & 0 \\
 \sqrt{\frac{2}{15}} & 0 & 0 \\
 0 & \sqrt{\frac{3}{5}} & 0 \\
 0 & \frac{1}{\sqrt{15}} & 0 \\
 0 & \sqrt{\frac{2}{15}} & 0 \\
 0 & \frac{1}{\sqrt{15}} & 0 \\
 0 & \sqrt{\frac{2}{15}} & 0 \\
 0 & 0 & \sqrt{\frac{3}{5}} \\
 0 & 0 & \frac{1}{\sqrt{15}} \\
 0 & 0 & \sqrt{\frac{2}{15}} \\
 0 & 0 & \frac{1}{\sqrt{15}} \\
 0 & 0 & \sqrt{\frac{2}{15}} \\
 0 & 0 & 0 \\
 0 & 0 & 0 \\
 0 & 0 & 0
 \end{pmatrix}
\een

\ben
[\mathrm{P}_{\mathbb{H}^2}]=
\begin{pmatrix}
   0 & 0 & 0 & 0 & 0 \\
 -\frac{1}{\sqrt{3}} & 0 & 0 & 0 & 0 \\
 \frac{1}{\sqrt{6}} & 0 & 0 & 0 & 0 \\
 \frac{1}{\sqrt{3}} & 0 & 0 & 0 & 0 \\
 -\frac{1}{\sqrt{6}} & 0 & 0 & 0 & 0 \\
 0 & 0 & 0 & 0 & 0 \\
 0 & \frac{1}{\sqrt{3}} & 0 & 0 & 0 \\
 0 & -\frac{1}{\sqrt{6}} & 0 & 0 & 0 \\
 0 & -\frac{1}{\sqrt{3}} & 0 & 0 & 0 \\
 0 & \frac{1}{\sqrt{6}} & 0 & 0 & 0 \\
 0 & 0 & 0 & 0 & 0 \\
 0 & 0 & -\frac{1}{\sqrt{3}} & 0 & 0 \\
 0 & 0 & \frac{1}{\sqrt{6}} & 0 & 0 \\
 0 & 0 & \frac{1}{\sqrt{3}} & 0 & 0 \\
 0 & 0 & -\frac{1}{\sqrt{6}} & 0 & 0 \\
 0 & 0 & 0 & \frac{1}{\sqrt{2}} & -\frac{1}{\sqrt{6}} \\
 0 & 0 & 0 & 0 & \sqrt{\frac{2}{3}} \\
 0 & 0 & 0 & -\frac{1}{\sqrt{2}} & -\frac{1}{\sqrt{6}}
 \end{pmatrix},\ [\mathrm{P}_{\mathbb{H}^{1,\nabla rot}}]=
\begin{pmatrix}
 0 & 0 & 0 \\
 -\frac{1}{\sqrt{3}} & 0 & 0 \\
 \frac{1}{\sqrt{6}} & 0 & 0 \\
 -\frac{1}{\sqrt{3}} & 0 & 0 \\
 \frac{1}{\sqrt{6}} & 0 & 0 \\
 0 & 0 & 0 \\
 0 & -\frac{1}{\sqrt{3}} & 0 \\
 0 & \frac{1}{\sqrt{6}} & 0 \\
 0 & -\frac{1}{\sqrt{3}} & 0 \\
 0 & \frac{1}{\sqrt{6}} & 0 \\
 0 & 0 & 0 \\
 0 & 0 & -\frac{1}{\sqrt{3}} \\
 0 & 0 & \frac{1}{\sqrt{6}} \\
 0 & 0 & -\frac{1}{\sqrt{3}} \\
 0 & 0 & \frac{1}{\sqrt{6}} \\
 0 & 0 & 0 \\
 0 & 0 & 0 \\
 0 & 0 & 0
  \end{pmatrix}
\een 

\bibliographystyle{unsrt}

\end{document}